\newcommand{\kett}[1]{\ket{#1}\rangle}
\newcommand{\brakett}[1]{\langle\braket{#1}\rangle}
\newcommand{\bea}{\begin{eqnarray}}
\newcommand{\ea}{\end{eqnarray}}
\newcommand{\eea}{\end{eqnarray}}
\newcommand{\Nb}{{N_\text{bit}}}
\newcommand{\Ng}{{N_\text{g}}}
\newcommand{\nb}{{n_\text{bit}}}
\newcommand{\Nl}{{N_\text{ball}}}
\newcommand{\Nr}{{N_\text{r}}}
\newcommand{\po}[1]{^{(#1)}}
\newtheorem{theorem}{Theorem}[section]
\newtheorem{definition}[theorem]{Definition}
\newtheorem{lemma}[theorem]{Lemma}
\begin{document}
\newcommand{\ri}{ i}
\newcommand{\re}{ e}
\newcommand{\bx}{{\bm x}}
\newcommand{\bd}{{\bm d}}
\newcommand{\be}{{\bm e}}
\newcommand{\br}{{\bm r}}
\newcommand{\bk}{{\bm k}}
\newcommand{\bA}{{\bm A}}
\newcommand{\bD}{{\bm D}}
\newcommand{\bE}{{\bm E}}
\newcommand{\bB}{{\bm B}}
\newcommand{\bI}{{\bm I}}
\newcommand{\bH}{{\bm H}}
\newcommand{\bL}{{\bm L}}
\newcommand{\bR}{{\bm R}}
\newcommand{\bZero}{{\bm 0}}
\newcommand{\bM}{{\bm M}}
\newcommand{\bX}{{\bm X}}
\newcommand{\bn}{{\bm n}}
\newcommand{\bs}{{\bm s}}
\newcommand{\bv}{{\bm v}}
\newcommand{\tbs}{\tilde{\bm s}}
\newcommand{\rSi}{{\rm Si}}
\newcommand{\beps}{\mbox{\boldmath{$\epsilon$}}}
\newcommand{\bGamma}{\mbox{\boldmath{$\Gamma$}}}
\newcommand{\bxi}{\mbox{\boldmath{$\xi$}}}
\newcommand{\rg}{{\rm g}}
\newcommand{\tr}{{\rm tr}}
\newcommand{\xmax}{x_{\rm max}}
\newcommand{\xb}{\overline{x}}
\newcommand{\pb}{\overline{p}}
\newcommand{\ra}{{\rm a}}
\newcommand{\rx}{{\rm x}}
\newcommand{\rs}{{\rm s}}
\newcommand{\rP}{{\rm P}}
\newcommand{\up}{\uparrow}
\newcommand{\down}{\downarrow}
\newcommand{\hc}{H_{\rm cond}}
\newcommand{\kb}{k_{\rm B}}
\newcommand{\cI}{{\cal I}}
\newcommand{\tit}{\tilde{t}}
\newcommand{\cE}{{\cal E}}
\newcommand{\cC}{{\cal C}}
\newcommand{\Ubs}{U_{\rm BS}}
\newcommand{\sech}{{\rm sech}}
\newcommand{\xs}{{x_1,\ldots,x_N}}
\newcommand{\qq}{{\bf ???}}
\newcommand*{\etal}{\textit{et al.}}
\newcommand{\empt}{\mbox{ }}
\newcommand{\tinI}{{\text{\tiny 1}}}
\newcommand{\tinDrei}{{\text{\tiny 3}}}
\newcommand{\dbla}{{\text{\tiny 2,3}}}
\newcommand{\dbll}{{\text{\tiny 1,2}}}
\newcommand{\dble}{{\tiny\mbox{1,2}}}
\newcommand{\trpl}{{\tiny\mbox{1,2,3}}}
\newcommand{\lra}{\leftrightarrow}
\def\vec#1{\bm{#1}}
\def\ket#1{|#1\rangle}
\def\bra#1{\langle#1|}
\def\keps{\bm{k}\boldsymbol{\varepsilon}}
\def\dm{\boldsymbol{\wp}}

\title{Stochastic emulation of quantum algorithms}
\author{Daniel Braun and Ronny M\"uller}
\affiliation{$^1$Eberhard-Karls-Universit\"at T\"ubingen, Institut f\"ur Theoretische Physik, 72076 T\"ubingen, Germany}

\centerline{\today}
\begin{abstract}
  Quantum algorithms profit from the interference of quantum states in
  an exponentially large Hilbert space and the fact that unitary
  transformations on that Hilbert space can be broken down to
  universal gates that act only on one or two qubits at the same time.
  The former aspect renders the direct classical simulation of quantum
  algorithms difficult.  Here we introduce higher-order partial
  derivatives of a probability distribution of particle positions as a
  new object that shares these basic properties of quantum mechanical
  states needed for a quantum algorithm.  Discretization of the
  positions allows one to represent the quantum mechanical state of
  $\nb$ qubits by $2(\nb+1)$ classical stochastic bits. Based on this,
  we demonstrate many-particle interference and
  representation of pure entangled quantum states via derivatives of
  probability distributions and find the universal set of stochastic maps
  that correspond to the quantum gates in a universal gate set.  We 
  prove that the propagation via the stochastic map built from those
  universal stochastic maps reproduces up to a prefactor exactly the
  evolution of the quantum mechanical state with the corresponding
  quantum algorithm, leading to an automated translation of a quantum
  algorithm to a stochastic classical algorithm.   
  We implement several well-known 
  quantum algorithms, analyse the scaling
  of the needed number of realizations with the number of qubits, and highlight the role of destructive interference for the cost of the emulation. 
  Foundational questions raised by the new
  representation of a quantum state are discussed.   
\end{abstract}
%\pacs{03.67.-a, 03.67.Lx, 03.67.Mn }
\maketitle
%\begin{twocolumn}
%\tableofcontents
%\listoffigures

\section{Introduction}
Schr\"odinger is often quoted with the statement ``I would not call [entanglement] {\em one} but rather {\em the} characteristic trait of quantum mechanics, the one that enforces its entire departure from classical lines of thought.'' \cite{schrodinger_discussion_1935}.  Nevertheless, there are many classical examples of 
``entanglement''. Consider for example an elastic membrane of rectangular
shape with lengths $a$, $b$ in directions $x,y$. It has eigenmodes of
vibration that can be found by solving the wave equation with
appropriate boundary conditions for the amplitude field $\psi(\bm x,t)$ with
$\bm x=(x,y)$, $x\in [0,a]$, $y\in [0,b]$.  A general solution  of the
wave equation can be written as superposition, $\psi(\bm
x,t)=\sum_{n,m}c_{nm}(t)\sin(k_{x,n} x)\sin(k_{y,m} y)$, where
$c_{nm}(t)\in\mathbb R$ are arbitrary real coefficients, and the
$k$-vector components are discrete, $k_{x,n}=\pi n/a$, $k_{y,m}=\pi
m/b$.  Clearly, if there is only one component $c_{nm}(t)$, the state is
a product-state, but most states are entangled in the same sense as in
quantum mechanics, i.e.~they cannot be written as a product of two
wavefunctions.  The wavefunction lives in a Hilbert space $\mathbb
R^\infty \times \mathbb R^\infty$, where the two directions $x$, $y$
play the role of two different subsystems.  The membrane is capable of
full interference in this Hilbert space, owing to the positive or
negative values that the wavefunction can take.  However, we have only
two subsystems, which might be extended to three by going to a
vibrating cube, but clearly this is the maximum number of different
subsystems that we can manipulate independently. The fact that in 
principle we can code any state of a quantum computer with an
arbitrary number of qubits in the Hilbert-space of the membrane or the
cube is of no use, as the known quantum algorithms rely on the fact
that they can be decomposed into quantum gates that act at the same
time on one or two qubits only.  \\

An arbitrary number $N$ of subsystems that can be manipulated
independently is available when we consider instead of a wavefunction
a joint probability distribution $P(x_1,\ldots,x_N)$ of $N$ random variables
$x_i$. Product states can then be defined as
$P(x_1,\ldots,x_N)=\prod_{i=1}^N P_i(x_i)$ --- which of course just means
that there are no correlations between the random variables.  Any
probability distribution that cannot be written in this form is
formally equivalent to a pure entangled state in quantum mechanics,
meaning just this: it cannot be written as a product of local
probability distributions, just as in quantum mechanics the wave
function of an entangled state cannot be written as a product of local
wavefunctions.  There is also a (restricted) superposition 
principle: If $p_a(x_1,\ldots,x_N)$ and $p_b(x_1,\ldots,x_N)$ are two
valid probability distribution, so is their convex combination,
$p(x_1,\ldots,x_N)=q p_a(x_1,\ldots,x_N)+(1-q)p_b(x_1,\ldots,x_N)$
with $0\le q\le 1$.  A local probability distribution might be
expanded in an orthonormal set of basis functions $b_i$ (e.g.~for
$x\in[a,b]$, define $b_i(x)=0$ unless $x\in [i \epsilon,
(i+1)\epsilon]$ where $b_i(x)=1$). A scalar product between two probability states denoted as $\kett{p_1}$ and $\kett{p_2}$ can be defined as
$\langle\langle p_1|p_2\rangle\rangle\equiv \int_{a}^b p_1(x)p_2(x)
dx$, and a 1-norm as $|| \,|p_1\rangle\rangle||\equiv \int_{a}^b p_1(x)
dx$, where $\epsilon$ is some chosen width of a basis function, leading to a finite-dimensional Banach space of dimension 
$(b-a)/\epsilon$).  A probability state such as 
$(|00\rangle\rangle+|11\rangle\rangle)/2$ with probability
distribution $p(x_1,x_2)=(b_0(x_1)b_0(x_2)+ b_1(x_1)b_1(x_2))/2$ then
corresponds to perfect classical correlations.  Clearly, here we are
not limited by the number of subsystems --- but we are, of course,
limited by the type of superpositions that we can create: Only convex
combinations are possible, a state as
e.g.~$(|00\rangle\rangle-|11\rangle\rangle)/2$ is in general forbidden
due  
to the need of positivity of the probability distribution.  This
implies that ``probability states'' are not capable of
interference, which would require compensating some positive
amplitudes by negative ones (or, as in quantum mechanics, complex ones
by other complex ones). Such direct manipulation of  probability distributions with
classical stochastic maps is therefore not sufficient for running
quantum algorithms, even though the modular structure of Hilbert space
(or Banach) space is there, as well as a restricted superposition
principle both locally as well as in the full Banach space of all
subsystems.\\

It is known that for pure state quantum computing unbound
entanglement must arise, otherwise the quantum algorithm can be
simulated efficiently classically \cite{Jozsa03}. But creating unbound
entanglement is not sufficient for having a computational advantage
over a classical computer, as is well-known on a much deeper
level through the Gottesmann-Knill theorem
\cite{gottesman_heisenberg_1998}. From the present example, we clearly see 
that interference is a necessary ingredient for a quantum computer
that cannot be efficiently simulated classically (see also
\cite{Braun06,BraunG08}). \\

From the above considerations we can distill the following
necessary requirements for a system that is supposed to function the
way a quantum computer does:
\begin{enumerate}
\item Composition of a system in terms of individual subsystems that can be manipulated independently (in sync with the first criterion of DiVincenzo \cite{divincenzo_physical_2000} for building a quantum computer, i.e.~the need of a scalable number of well-characterized qubits)
\item Interference on the level of a single subsystem
\item A linear vector space with a tensor product structure, allowing
  one to manipulate states by local operations (or by acting on at
  most two subsystems)
\item Interference on the level of the full vector space
\item Physical reality.
\end{enumerate}
The last point becomes obvious when we want to actually build a
computer.  Mathematically, one can easily imagine all sorts of spaces
that satisfy the first four criteria.  But they have
to describe physical reality if we want to do something with them in
real life.  The question is then whether besides the quantum
mechanical wavefunction other objects exist in Nature that allow one
to satisfy all of the above criteria, and hence, ultimately, run
quantum algorithms using those objects.  Surprisingly, the answer is yes.

\section{A new type of ``wavefunction''  and the
  grabit}
The last example already goes a long way in the direction of what we
need.  In particular, with probabilistic bits, and the full
probability distribution propagated by a stochastic map, we
get the same parallelism of propagation in an
exponentially large (Banach)--space that is often quoted as the origin  of
the ``quantum 
supremacy''. Recently, probabilistic bits (``pbits'') and a physical 
realization via spintronics were suggested in \cite{borders_integer_2019}.   The 
only problem with probability distributions is the impossibility of
interference.  So the question 
is:  can we somehow transform the probability distribution in another
way from what we essentially do in quantum mechanics, namely taking
its square root and providing it with a complex phase factor, in order
to get an object that can at least take 
negative values, and still represent a certain physical reality?\\

% One obvious choice would be to subtract an offset from the probability
% distribution.   E.g.~if $p(x)$ is a probability distribution of a
% single system, consider $q(x)=p(x)-1/2$ as a candidate ``wavefunction'' of the
% system. Clearly, it is capable of single-particle interference, as the
% requirement of positivity is lifted.  However, it quickly turns out
% that we quickly run into trouble when
\subsection{Higher-order derivatives of probability distributions}
Consider as a many-particle wavefunction of $N$ particles on the real
line the $N$-th derivative of the probability distribution, i.e.~define
\begin{equation}
  \label{eq:psi}
  \psi(\xs)\equiv \partial_{x_1}\ldots\partial_{x_N}P(\xs),
\end{equation}
where the r.h.s.~will be abbreviated as $\partial^N P$ as well. In particular, a single--particle wavefunction is just the
(1D)-gradient $\psi(x)=\partial_x P(x)$ of a single-particle
probability distribution.  With $x\in[a,b]$, $P(x)$ lives in the
space ${\cal L}^1[a,b]$ of integrable real functions on $[a,b]$, which
implies that  $\psi(x)$ lives in the
space %${\cal L}^2[a,b]$
of twice-integrable real functions on $[a,b]$. 
We can expand $\psi(x)=\sum_n c_n b_n(x)$, where $c_n\in \mathbb R$,
and $b_n(x)$ is a set of basis functions %$\cal
% L^2[a,b]$
of twice-integrable real functions on $[a,b]$. A scalar product of two
wavefunctions $\psi(x)$ and 
$\phi(x)$ can be defined as $\langle \psi|\phi\rangle=\int_a^b
\psi(x)\phi(x)\,dx$, under which we can orthonormalize the set of basis
functions $\{b_n(x)\}$. A norm of the wave-function can be derived as
usual from the scalar product, i.e.~the Hilbert-Schmidt norm
$||\psi||_2=\sqrt{\braket{\psi|\psi}}$, under which we can normalize the
wavefunctions for making a connection to the usual quantum
mechanical wavefunctions. The 1-norm $||\psi||_1=\int_a^b|\psi(x)|\,dx$ will also be used.\\

In the many-body case, if there are no correlations, then
\begin{eqnarray}
  \label{eq:prodpsi}
  \psi(\xs)&=&\partial_{x_1}\ldots\partial_{x_N}P_1(x_1)\ldots
              P_N(x_N)\\
  &=&\prod_{i=1}^N\partial_{x_i}P_i(x_i)=\prod_{i=1}^N\psi_i(x_i)\,,
\end{eqnarray}
where the $\psi_i(x_i)$ are local ``wavefunctions''.  I.e.~in this
case we obtain a product state for the wavefunction rather than just
for the probability distribution. We can arbitrarily superpose such
product states of basis functions, leading to a general state
\begin{equation}
  \label{eq:psigen}
  \psi(\xs)=\sum_{\bm n} c_{\bm n}b_{n_1}(x_1)\cdot\ldots\cdot b_{n_N}(x_N)\,,
\end{equation}
where ${\bm n}\equiv (n_1,n_2,\ldots,n_N)$ and $c_{\bm n}\in\mathbb
R$. One easily checks that 
with this the criteria 1.)-4.) from the Introduction are
satisfied. We will come back to criterion 5.) in more detail later,
but it is clear that the ``wavefunction'' defined in \eqref{eq:psi}
has a clear physical meaning.  First-order spatial derivatives of
concentrations 
enter for example in Fick's law of diffusion, and osmotic pressure is
driven by concentration differences. Reaction-diffusion systems were
analyzed by Turing as a possible mechanism of pattern formation in
biology \cite{a._m._turing_chemical_1952}, and there is even a branch of physics
trying to establish universal computing based on such systems
\cite{gorecki_chemical_2015}. For large particle numbers, concentrations can be seen as proxies of corresponding probabilities.    Higher-order derivatives known to play a role in physics are  
third-order time-derivatives that appear in electrodynamics when
trying to build it entirely on the motion of charged particles
\cite{LandauLifschitz87}.  Also, in quantum mechanics we can get derivatives of
arbitrarily high order from the time-evolution operator $\exp(-i\hat
H t/\hbar)$ if the kinetic energy in the hamiltonian $\hat H$ is
expressed as a second derivative in position-representation, 
but it appears that derivatives of probability distributions of a given large order have not been considered as physical objects in their own right yet.  
Here we show that by
doing so one can emulate any pure-state quantum algorithm as stochastic classical algorithm. Furthermore, this might open new perspectives on the possible physical nature of the quantum mechanical wave function, which almost 100 years after its invention is still subject to debate (see \cite{pusey_reality_2012} and references therein). 

\subsection{Discretization}
For numerical implementations it is necessary to discretize coordinates $x$, $x=i\Delta$, $i\in \mathbb N$, and approximate
derivatives by quotients of finite differences. Hence, for a single particle, we have
$\psi(x)=\partial_x P(x)\simeq (P((i+1)\Delta)-P(i\Delta))/\Delta \equiv
P({i+1})-P(i)$, where in the last step and from now one we set $\Delta=1$.  I.e.~one can code one value of $\psi$
using two values of $P$. For realizing a single continuous bit,
one needs four values of $P$ \footnote{In principle one could do with
  only three values, with one common support at $x_i$, but the two
  values of $\psi$ would then depend on a common value of $p$, which
  is inconvenient.}. Counting the values $i$ from
$i=0$, we can define a two-state system with continuous real
coefficients $\psi_0$ and $\psi_1$ as
\begin{equation}
  \label{eq:psi01}
   {\psi_0 \choose \psi_1}=-{P(x_1)-P(x_0) \choose P(x_3)-P(x_2)}\equiv
   {P_0-P_1 \choose P_2-P_3} \,.
 \end{equation}
 For convenience we have introduced an (irrelevant) global minus-sign.
Apart from the fact that the $\psi_i$ are real, and possibly a different
normalization, the state defined by 
\eqref{eq:psi01} emulates a qubit in a pure state
with amplitudes $\psi_i$ for computational basis states
$\ket{i}$. Owing to the origin as a gradient of the probability
distribution, a system with a two-dimensional state space as defined
in \eqref{eq:psi01} will be called a ``gradient-bit'', or ``grabit''
for short.  Eq.\eqref{eq:psi01} can be written compactly as
\begin{equation}
  \label{eq:psii}
  \psi_i=\sum_{\sigma=0,1}(-1)^\sigma P_{2i+\sigma}, \,\,i\in\{0,1\}\,.
\end{equation}
We will call $i$ the ``byte logical value'' (``blv'' or ``logical value'' for short), as it labels the grabit states   
$\ket{0}$ or $\ket{1}$, $\sigma$ the ``gradient value'', and $I=2i+\sigma$
the ``byte4 value'' (``b4v'' for short). They will be denoted with small roman, small
greek, and capital roman letters, respectively.  We can think of $i$
and $\sigma$ really as two independent coordinates that label four
different bins, see Fig.~\ref{fig:grabitsquare}. A probabilistic
realization of the grabit is realized via a particle that is found
with probability $P_I$ in one of the four bins. If just one of the 4 bins
is realized with probability 1, we have states 
$\ket{0}$, $-\ket{0}$, $\ket{1}$, and $-\ket{1}$, respectively, for
$I=0,1,2,3$, or equivalently $(i,\sigma)=(0,0)$, $(0,1)$,  $(1,0)$,
$(1,1)$. So two classical stochastic bits code one (real) grabit.
\begin{figure}[t]
  \includegraphics[width=0.25\textwidth]{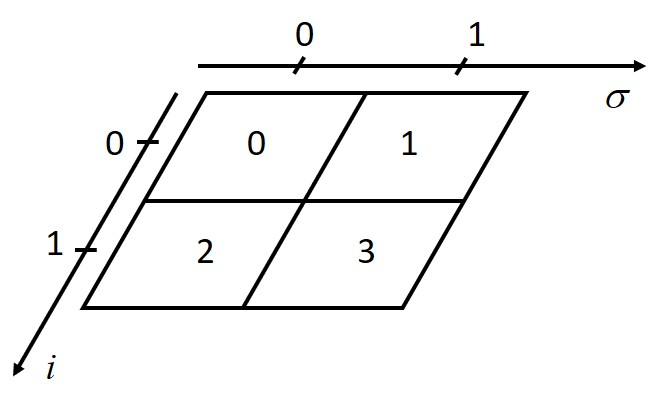}
  \includegraphics[width=0.25\textwidth]{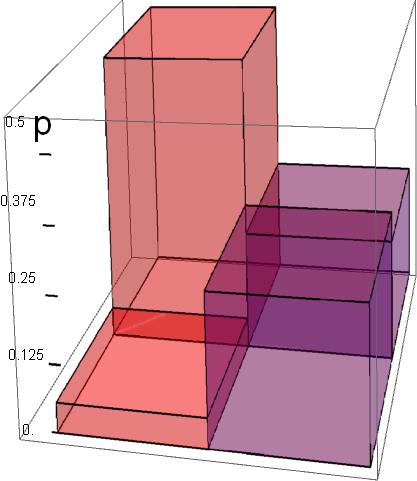}
  \includegraphics[width=0.25\textwidth]{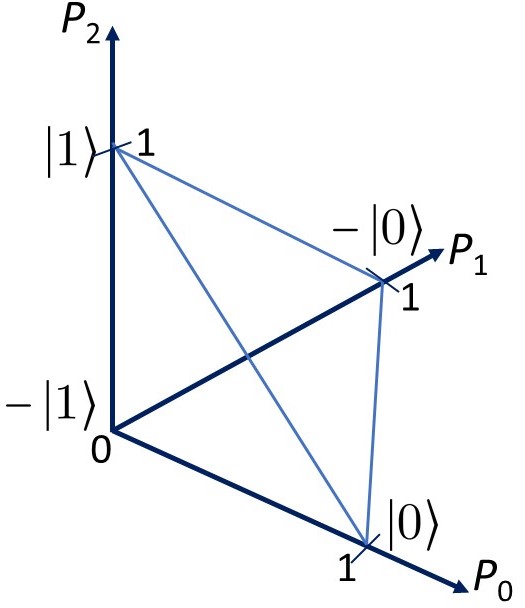}
  \caption{{\em Left:} A single grabit realized as a discretized gradient of a
          probability distribution leads to four bins in which a ball
          is found with
          probabilities $P_I$, $I=0,1,2,3$. The four bins are arranged
          on a square such that $I=2i+\sigma$  where the index $i=0,1$
          labels the logical value and $\sigma=0,1$ the gradient
          value. The (unnormalized) grabit state is given by the
          amplitudes $\psi_0=P_0-P_1$ and
          $\psi_1=P_2-P_3$. {\em Middle:} Example of a grabit state
          $\ket{\psi}=0.3\ket{0}-0.2\ket{1}$, realized via $\bm P=(0.5,0.2,0.05,0.25)$.   
{\em Right:} State space of a single grabit. Allowed pure states are on the surface of the tetrahedron, whose vertices give the signed computational basis states. Convex combinations of the b4v probability states lead to grabit states capable of interference.
          \label{fig:grabitsquare} }
\end{figure}
% We will continue to denote probability distributions over byte4 values
% with capital letters $P_I$; probability distributions over byte
% logical values $i$ will be denoted with small letters, $p_i$.  

\subsection{Single grabit interference}
Let us verify at the hand of the simplest example that interference on
the single-grabit level works correctly, by checking that a state
$\ket{0}$ and a state $-\ket{0}$ can indeed be superposed to give
0$\ket{0}$:
\begin{equation}
  \label{eq:sup}
\ket{0}+(-\ket{0})=0\ket{0} \,.
\end{equation}
According to
\eqref{eq:psii}, the two states involved are represented in terms of
probabilities as $P_0=1$ and $P_1$=1, respectively, or, in terms of
the full 4D probability vectors, as $\bm P_A=(1,0,0,0)^t$, $\bm
P_B=(0,1,0,0)^t$. The linear superposition \eqref{eq:sup} translates
to an equal weight convex combination of the two
probability distributions, $\bm P=(\bm P_A+\bm P_B)/2=(1,1,0,0)/2$.
Inserting this into \eqref{eq:psii} leads immediately to
$\psi_0=\psi_1=0$. So interference on the single-grabit level works
indeed. \\

All other examples of single-grabit interference work the same way:
convex combinations of the probability values $P_I$ ($I=0,1,2,3$)
translate into real superpositions of the $\psi_i$ with coefficients
of either sign and vice
versa. The values of the coefficients in the superposition are only
limited by the fact that $0\le P_I\le 1$ and $\sum_{I=0}^3P_I=1$. Hence,
the state space of the single grabit has the geometry of a
tetrahedron, see Fig.~\ref{fig:grabitsquare}. %\qdq Here , see Fig.~\ref{fig.tetra} was referenced but that figure doesnt exists \qdq
% \begin{figure}[t]
%   \includegraphics[width=0.3\textwidth]{tetrahedron.jpg}
%   \caption{State space of a single grabit. Allowed pure states are on the surface of the tetrahedron, whose vertices give the signed computational basis states. Convex combinations of the b4v probability states lead to grabit states capable of interference.  \label{fig:tetra} }
% \end{figure}
Its vertices correspond to the
states $\ket{0},-\ket{0},\ket{1},-\ket{1}$.  Points on the edges
interpolating between vertices with
different logical values give
linear superpositions of $\ket{0}$ and $\ket{1}$ with signs of the
coefficients given by the signs of the vertices, whereas points on
edges 
interpolating between vertices with
the same logical values still represent states $\ket{0}$ or $\ket{1}$,
respectively, with amplitudes varying from -1 to +1.  Only points on
the surface of the tetrahedron are allowed for pure states.  Since only real linear combinations are 
allowed, the surface of the tetrahedron can be
seen as the analogue of a great circle of the Bloch-sphere in quantum
mechanics in the $xz$-plane.\\

Since in quantum mechanics states are described by points in a
projective Hilbert space (i.e.~state vectors are only defined up to a
complex prefactor), the relevant quantum information in a grabit is
only the ratio of the amplitudes of states $\ket{0}$ and
$\ket{1}$. This motivates a parametrization of the probability vector
$\bm P$ of a single grabit in the form
\begin{equation}
  \label{eq:par}
  \bm P=\frac{1}{2}\begin{pmatrix} 0\\0\\1\\1\end{pmatrix}
  +\frac{r}{2}\begin{pmatrix} 1\\1\\-1\\-1\end{pmatrix}
  +\frac{b}{2}\begin{pmatrix} q\\-q\\1\\-1\end{pmatrix}\,,
\end{equation}
where $\psi_1=P_2-P_3\equiv b\ne 0$ is assumed. The grabit state is
then coded in $q=\psi_0/\psi_1$, whereas $r$ can be seen as a
gauge-degree of freedom that does not change the state of the
grabit.  In the case of $\psi_1=b=0$, we can take $\psi_0=a$ as
independent variable instead of $q$, giving 
\begin{equation}
  \label{eq:par}
  \bm P=\frac{1}{2}\begin{pmatrix} 0\\0\\1\\1\end{pmatrix}
  +\frac{r}{2}\begin{pmatrix} 1\\1\\-1\\-1\end{pmatrix}
  +\frac{a}{2}\begin{pmatrix} 1\\-1\\0\\0\end{pmatrix}\,.
\end{equation}
While $r,b,q$ (or $r$, $a$ for $b=0$) can be seen as new independent
variables that replace $P_0,P_1,P_2$, their ranges are not
independent. Indeed, from $0\le P_I\le 1$, for $I\in\{0,1,2,3\}$, it
follows for $q>0$ that
\begin{equation}
  \label{eq:rgb}
  |b|\le\left\{
      \begin{matrix}
        r/q & 0\le r \le \frac{q}{1+q}\\
        1-r & \frac{q}{1+q}\le r\le 1\,.
      \end{matrix}
    \right.
\end{equation}
This implies that $\text{argmax}_r |b|=q/(1+q)$, and $\text{max}_r
|b|=1/(1+q)$.   The value of $a$ is maximized for the same value of
$r$ and gives $\text{max}_r
|a|=q/(1+q)$. The probability distribution with maximum contrast,
meaning the largest maximum amplitudes $|\psi_0|$ and $|\psi_1|$  for
given value of $q>0$, obtained from $\psi_0>0$ and $\psi_1>0$, is given by
\begin{equation}
  \label{eq:pmax1}
  \bm P_\text{max}(q)=\begin{pmatrix} \frac{q}{1+q}\\0\\\frac{1}{1+q}\\0\end{pmatrix}\,.
\end{equation}
The correct ratio between $\psi_0$ and $\psi_1$ is obvious from this
expression, as is the fact that there is no ``socket'' of
center-of-mass probabilities, in the sense that $P_0-P_1=P_0+P_1$ and
$P_2-P_3=P_2+P_3$, i.e.~all the probability in bins 0 or 1 goes into
the difference that determines $\psi_0$, and all probability in bins 2
or 3 goes into the difference that determines $\psi_1$.  This corresponds 
to a gauge choice, namely the maximum possible value of $r$. We see 
that in this case the {\em physical probability} $\tilde{p}_0=P_0+P_1$ of finding a logical
value 0 (or physical probability $\tilde{p}_1=P_2+P_3$ for finding a
logical value 1) when marginalizing over the gradient value 
is given directly by $\tilde{p}_0=|\psi_0|$ (respectively
$\tilde{p}_1=|\psi_1|$).  This marks a deviation from standard quantum
mechanics and the Born rule that probabilities are given by the {\em
  squared} absolute value of the amplitude, unless a single
probability equals 1.  Useful quantum
algorithms are mostly constructed such that there are one or relatively
few peaks in the final probability distribution on logical values that
allow one to obtain the result of the calculation by efficient
classical post-processing. 
In that case, it is
irrelevant whether the first or second power of the absolute amplitude
determines the probability of the outcome. 
Interestingly, Born
initially advocated the use of the power one \cite{BornBio}.  We will therefore
refer to the rule  ${p}_i=|\psi_i|$ as the ``Born-1'' rule. \\
Most
importantly, however, we see that in this situation of maximum
contrast we have $\tilde{p}_i=p_i\equiv|\psi_i|$, i.e.~the actual physical
probability to find the particle in a bin with logical value $i$ is
given by the Born-1 rule.  This means that if we manage to propagate
$\psi$ according to a quantum algorithm that ends with a peak in
$|\psi|$ AND we can enforce the gauge choice of maximum contrast, the
particle will be found with correspondingly high probability in the
bin that corresponds to the maximum of $|\psi|^2$ --- just as if the
algorithm was run on a quantum computer.\\

For $q<0$ obtained through $\psi_0<0$, $\psi_1>0$,
eq.\eqref{eq:pmax1} reads correspondingly
\begin{equation}
  \label{eq:pmax2}
  \bm P_\text{max}(q)=\begin{pmatrix} 0\\\frac{q}{1+|q|}\\\frac{1}{1+|q|}\\0\end{pmatrix}\,,
\end{equation}
and similarly for the other combinations of signs of $\psi_0$ and
$\psi_1$.  The rule is that always either $P_0$ or $P_1$ equals
$|q|/(1+|q|)$ (and the other one equals $0$) if the sign of $\psi_0$
is positive or negative, respectively; and similarly for $P_2$ and
$P_3$ which always equal $1/(1+|q|)$ or 0, depending on the sign of
$\psi_1$.
\section{Multi-grabit states}
The discretization procedure introduced for a single grabit also works
for any number of grabits. E.g.~for two grabits we have
\begin{eqnarray}
  \label{eq:psi12}
  \psi(x_1,x_2)&=&\partial_{x_1}\partial_{x_2}P(x_1,x_2)\simeq
  \partial_{x_1}(P(x_1,x_2+\Delta x_2)-P(x_1,x_2+\Delta x_2))/\Delta
                   x_2\\
  &\simeq& \frac{1}{\Delta x_1 \Delta x_2}(P(x_1+\Delta x_1,x_2+\Delta x_2)-P(x_1,x_2+\Delta
  x_2)-P(x_1+\Delta x_1,x_2)+P(x_1,x_2)\,.\nonumber
\end{eqnarray}
Setting again $\Delta x_i=1$ and numbering bins as before, this can be
written as
 \begin{equation}
  \label{eq:psi12.2}
  \psi_{i_1,i_2}=\sum_{\sigma_1,\sigma_2\in\{0,1\}}(-1)^{\sigma_1+\sigma_2}P_{2i_1+\sigma_1,2i_2+\sigma_2}\,.
\end{equation}
As in quantum mechanics, the sign of the state can now arise from
either grabit, e.g.~a state $\ket{00}$ can be represented by
$P_{11}=1$, or equivalently by $P_{00}=1$. The latter case is really
a  
product state $(-\ket{0})\otimes(-\ket{0})$.  Two-particle
interference works just as nicely as single-particle interference,
e.g.~$\ket{00}+(-\ket{00})/2$ can be represented by the convex
combination of two probability vectors with components $(\bm P_1)_{IJ}=\delta_{{IJ},00}$,
$(\bm P_2)_{IJ}=\delta_{IJ,01}$, which correspond to states
$\ket{00}$ and $\ket{0}(-\ket{0})$.  The convex combination
$\bm P=(\bm P_1+\bm P_2)/2$ has then components
$P_{IJ}=\delta_{I,0}((\delta_{J,0}+\delta_{J,1})/2)$, which translates
to $\psi_{ij}=(\psi_1)_i(\psi_2)_j$ with $\psi_2=(0,0)^t$, hence
$\ket{\psi}=0$, indeed. \\

We can also easily create entangled states.  It
is useful to visualize 
the addition of
probabilities and resulting grabit states. Let us denote a probability
distribution with just one byte4 value of $I$ realized for a grabit
number 1, e.g.~$I=0$, 
as {\scriptsize{\young(1\empt,\empt\empt)}}, where the squares are
labeled according to Fig.\ref{fig:grabitsquare}. So this state is
$+\ket{0}$ of the first grabit. Similarly, if there are 
two grabits with a single combination of bins realized, we denote it
as e.g.~{\scriptsize{\young(\empt \dble,\empt \empt)}}, which
represents state $(-\ket{0})(-\ket{0})=\ket{00}$, or
{\scriptsize{\young(\empt 1,2\empt)}} which represents the  state
$-\ket{01}$. The Bell states $(\ket{00}\pm\ket{11})/2$ are then realized
e.g.~by 
\begin{eqnarray}
  \label{eq:bellp}
\ket{\psi_+}&=&\frac{1}{2}(\ket{00}+\ket{11})=
                \frac{1}{2}\left({\scriptsize{\young(\dble \empt
    ,\empt \empt)}}+{\scriptsize{\young(\empt
    \empt,\dble \empt )}}\right)\\
\ket{\psi_-}&=&\frac{1}{2}(\ket{00}-\ket{11})=
                \frac{1}{2}\left({\scriptsize{\young(\dble \empt
    ,\empt \empt)}}+{\scriptsize{\young(\empt
    \empt,21)}}\right)\,.
\end{eqnarray}
Two-particle interference works like this:
\begin{eqnarray}
  \label{eq:bellsum}
\frac{1}{2}(\ket{\psi_+}+\ket{\psi_-})&=&\frac{1}{4}\left({\scriptsize{\young(\dble
                                          \empt,\empt \empt)}}+{\scriptsize{\young(\empt
    \empt,\dble \empt)}}+{\scriptsize{\young(\dble \empt,\empt \empt)}}+{\scriptsize{\young(\empt
    \empt,21)}}\right)\\
 &=&\frac{1}{2}{\scriptsize{\young(\dble\empt,\empt \empt)}}
 +\frac{1}{4}{\scriptsize{\young(\empt
    \empt,\dble \empt)}}
     +\frac{1}{4}{\scriptsize{\young(\empt \empt,21)}}\\
                                      &=&\frac{1}{2}\ket{00}+\frac{1}{4}(
                                          \ket{1}+(-\ket{1})\otimes \ket{1}\\
  &=&\frac{1}{2}\ket{00}\,,
\end{eqnarray}
thus giving the correct result: The component $\ket{11}$ has
destructively interfered and vanishes due to the superposition of $\ket{1}$
and $-\ket{1}$ in the first grabit.\\

All of this directly generalizes to any number of grabits. A general
$N$-grabit state is given by 
 \begin{equation}
  \label{eq:psiN}
  \psi_{\bm i}=\sum_{\bm \sigma
    \in\{0,1\}^{\otimes N}}(-1)^{\sum_{i=1}^N\sigma_i}\,P_{2\bm i+\bm \sigma}\,,
\end{equation}
where $\bm i=(i_1,\ldots,i_N)$ and $\bm
\sigma=(\sigma_1,\ldots,\sigma_N)$ label the logical values and
gradient values $0,\ldots
2^{N}-1$, respectively. The map
\eqref{eq:psiN}, $P\mapsto \psi$, will be denoted symbolically as
$\psi=\partial^{N}P$, recalling the origin as a (discretized) $N$th order derivative
of a probability distribution. \\

% A generalization of \eqref{eq:bellsum} to
% states 
% $\ket{\psi}_{\pm}^{(3)}=(\ket{000}\pm\ket{111})/2$ is then realized
% e.g.~by 
% \begin{eqnarray}
%   \label{eq:bell3sum}
%   \frac{1}{2}(\ket{\psi_+}^{(3)}+\ket{\psi_-}^{(3)})
%   &=&\frac{1}{2}\left(\frac{1}{2}\left(
%   {{\young(\trpl\empt,\empt\empt)}}+
%   {{\young(\empt\empt,\trpl\empt)}}\right)+
%       \frac{1}{2}\left(
% {{\young(\trpl\empt,\empt\empt)}}+
% {{\young(\empt\empt,\dbll\tinDrei)}}\right)\right)\\
%   &=&\frac{1}{2}
% {{\young(\trpl\empt,\empt\empt)}}
%       +\frac{1}{4}
% {{\young(\empt\empt,\trpl\empt)}}
%       +\frac{1}{4}
% {{\young(\empt\empt,\dbll\tinDrei)}}\\
%   &=&\frac{1}{2}\ket{000}+\frac{1}{2}(\ket{11})\frac{1}{2}\cdot 0 \cdot\ket{1}\\
%   &=&\frac{1}{2}\ket{000}\,.
% \end{eqnarray}
% % \stopped here

To summarize, a $4^N$-component probability vector is used for coding
the $2^N$ components of a real-valued $\psi$ (see
Sec.\ref{Sec.complex} for complex states).  This means that two classical bits %ref here is weird in my compilation
are required to represent one real-valued qubit.  As in standard
quantum mechanics, where the quantum state is
never explicitly stored but miraculously kept track of by Nature, the
probability vector is never stored explicitly either. 
It is
implicitly stored in the correlations  of the particles
realizing different grabits, or, more precisely, in the derivatives of
order $\Nb$ for $\Nb$ grabits of those correlations.  The derivatives are
approximated as finite differences of those correlations.
In order to emulate a quantum algorithm, we have to sample from these
probability distributions and then act on the realizations in order
to emulate quantum gates for implementing the corresponding stochastic
maps, as will be seen in the next
section. Sampling from the probability distributions gives us $\Nl$
realizations of a set of $\Nb$ little billard balls.  % Each billard
% ball is 
% in one of
% the four states 0,1,2,3 and represents one grabit.  
When $\Nl$ is large
enough, we can estimate 
the probability distribution $P$ over the byte4 values from the observed
relative frequencies $R$.  We will continue to use upper case $P$ (or $P_I$ for a
single component) for probability vectors over the byte4 values
$I$. Column vectors of components $P_I$ may also be represented in
vector notation, $\bm P$.  
Relative frequencies (i.e.~histograms) over b4v's will be denoted by upper case
$R_I$, relative frequencies over blv's by lower case $r_i$. Instead of $i$ and $I$ running from $0$ to $2^\Nb-1$ or $4^\Nb-1$, respectively, we also use $\bm i$ and $\bm I$ when we think of them as strings of single grabit vlaues, $\bm i=(i_1,\ldots,i_\Nb)$ and $\bm I=(I_1,\ldots,I_\Nb)$, and correspondingly for $\bm \sigma$.  All 
probabilities and relative frequencies will be normalized to
1, $\sum_{\bm i\in \{0,1\}^\Nb} r_{\bm i}=1$ and   $\sum_{\bm I\in
    \{0,1,2,3\}^\Nb} R_{\bm I}=1$. The symbol $\psi$ will be used for a grabit state (see
    eq.\eqref{eq:psiN}).  A hat is reserved for estimators.  In
    particular, an estimator of a real-valued multi-grabit state $\psi$ based on relative frequencies
    is
    \begin{equation}
      \label{eq:psiest}
      \hat{\psi}_{\bm i}=\sum_{\bm\sigma\in\{0,1\}^\Nb} (-1)^{\sum_{k=1}^\Nb
        \sigma_k}R_{2\bm i+\bm \sigma}\,.
    \end{equation}
    We also use histograms $H_{\bm I}$ normalized to $\sum_{\bm I}H_{\bm I}=\Nl$. 
The physical probability distribution is the one with the gradient degree of freedom ``traced-out'' (or marginalized), 
    \begin{equation}
      \label{eq:ptilde}
      \tilde{p}_{\bm i}=\sum_{\bm\sigma\in\{0,1\}^\Nb} P_{2\bm i+\bm \sigma}\,,
    \end{equation}
    and its estimator based on the relative frequencies reads
    \begin{equation}
      \label{eq:ptildeest}
      \hat{\tilde{p}}_{\bm i}=\sum_{\bm\sigma\in\{0,1\}^\Nb} R_{2\bm i+\bm \sigma}\,,
    \end{equation}
    $\Psi$ will be reserved for the true
quantum mechanical state with complex amplitudes. \\

We note in passing that there are other ways of defining a
many-body state capable of interference in a high-dimensional Banach
space.  E.g.~one can introduce
an extra coordinate $y_i$. With a pair of
coordinates $x_i,y_i$, single-particle probability distribution
$p(x_i,y_i)$ for particle $i$, and a joint probability distribution
$P(x_1,y_1,\ldots,x_N,y_N)$, we can then define a many-body state
 as
$\psi(x_1,\ldots,x_N)=\partial_{y_1}\ldots\partial_{y_N}P(x_1,y_1,\ldots,x_N,y_N)|_{\bm
  y=\bm 0}$.  Indeed, the gradient value $\sigma_i\in\{0,1\}$ introduced
in the discretization scheme above plays exactly the role of such an
additional coordinate for particle $i$.  Such an additional coordinate
leads to interesting questions about physical reality (see
\ref{Sec.Realism}).  

\section{Grabit gates}
As we have seen, superpositions with real coefficients of either sign of
states of $N$ grabits are generated through convex combinations of
probability vectors over the $4^N$ byte4 values. Such convex
combinations are created physically as usual via stochastic
maps. Physically, we will act on the drawn realizations in
such a way that the derivatives (respectively, differences under the
approximations made) of the  probability distributions are propagated
in the same way as the multi-particle quantum mechanical wave function. In
the special case of stochastic maps occurring with probability one, we
have deterministic gates. Let us now determine the stochastic gates
corresponding to the most important single-qubit and two-qubit gates.

\subsection{Single grabit gates}
\subsubsection{The $X$ gate}
The $X$-gate flips the logical value $i$ while keeping the phase,
i.e.~the gradient value $\sigma$. I.e.~it maps
$\pm\ket{0}\leftrightarrow\pm\ket{1}$. This is achieved by flipping with
probability 1 the byte4 values $0\leftrightarrow 2$ and
$1\leftrightarrow 3$, which translates to a physical operation of
flipping the corresponding particle positions.  Hence, the
stochastic map of a 
single grabit's vector of probabilities $\bm P=(P_0,P_1,P_2,P_3)^T$ over
the four byte4 values is given by 
\begin{equation}
  \label{eq:Sx}
  S_x=\begin{pmatrix}
    0&0&1&0\\
    0&0&0&1\\
    1&0&0&0\\
    0&1&0&0
    \end{pmatrix}\,.
  \end{equation}
  Note that, just as for qubits, it is enough to specify the
  stochastic map for computational basis states.  Due to the
  linearity of the stochastic propagation, it then works for any grabit
  state.
  
\subsubsection{The $Z$ gate}
The $Z$ gate flips the phase of the $\ket{1}$-state and leaves
the phase of $\ket{0}$ untouched,
i.e.~$\pm\ket{0}\lra\pm\ket{0}$ and  $\pm\ket{1}\lra\mp\ket{1}$. In
terms of the particle's bins, this corresponds to flipping the byte4
values as $0\lra 0$, $1\lra 1$, $2\lra 3$, $3\lra 2$, all with
probability 1.  Hence, the stochastic map reads
\begin{equation}
  \label{eq:Sz}
  S_z=\begin{pmatrix}
    1&0&0&0\\
    0&1&0&0\\
    0&0&0&1\\
    0&0&1&0
    \end{pmatrix}\,,
  \end{equation}
  which happens to be the same as the representation of  the ordinary CNOT for two qubits.
\subsubsection{The Hadamard gate}
The Hadamard gate is maybe the most important gate of all.
Almost all known quantum algorithms begin by applying the Hadamard
gate to all qubits.  It is the decisive gate for creating the massive
interference that characterizes a quantum computer.  A single Hadamard
gate creates {\em one i-bit} of interference \cite{Braun06,Arnaud07,BraunG08}
by realizing the transformations $\pm\ket{0}\lra
\pm(\ket{0}+\ket{1})/2$ and $\pm\ket{1}\lra
\pm(\ket{0}-\ket{1})/2$. Here we have already used a normalization that is
adapted to propagating probabilities, but as mentioned earlier, only
ratios between amplitudes matter, such that the normalization by
dividing by 2 instead of the usual $\sqrt{2}$ is irrelevant. The
mapping of states implies that we have to flip byte4 
values for the particle according to $0\to 2$ or $0$, $1\to 3$ or
$1$, $2\to 3$ or $0$, $3\to 1$ or $3\to 2$ with all flips
performed with probability $1/2$. The stochastic map that describes
this propagation reads    
\begin{equation}
  \label{eq:SH}
  S_H=\frac{1}{2}\begin{pmatrix}
    1&0&1&0\\
    0&1&0&1\\
    1&0&0&1\\
    0&1&1&0
    \end{pmatrix}\,.
\end{equation}
We can see that $S_H=(S_X+S_Z)/2$, just as for unitary single-qubit
gates $H=(X+Z)/\sqrt{2}$. \\
A universal quantum computer can be constructed once we can implement
the Hadamard gate, the $\pi/8$-gate (the unitary
$\text{diag}(1,e^{i\pi/4})$ in the computational basis), and a
controlled-NOT (``CNOT'') gate.  A single-grabit
gate representing a unitary with 
finite imaginary parts can be realized with a stochastic map acting
on two grabits at 
the time, and also the CNOT is a two-qubit gate to which we turn now.

\subsection{Two-grabit gates}
\subsubsection{The CNOT}\label{Sec.CNOT}
The CNOT flips a target qubit if a control qubit is in state
$\ket{1}$. I.e.~if we have two qubits, with the first one %\qdq being
the
control-qubit and the second one %\qdq being  braucht man hier nicht...
the target-qubit, then the CNOT performs
a permutation of computational basis states according to
$\pm\ket{00}\lra\pm\ket{00}$, $\pm\ket{01}\lra\pm\ket{01}$,
$\pm\ket{10}\lra\pm\ket{11}$. This is easily translated to a
stochastic map for byte4 values of grabits by keeping the local signs,
just as in the case of the single-grabit gates. For the CNOT this translates
to a 16$\times$16 permutation matrix $S_\text{CNOT}$ that will not be written down here, but as 
an example,  $(-\ket{1})(+\ket{1})\lra(-\ket{1})(+\ket{0})$
translates to flipping joint b4v's of particle positions $(3,2)\lra
(3,0)$ with probability 1.

\subsubsection{Complex states and gates}\label{Sec.complex}
Up to now we have considered interference only on the basis of
positive and negative amplitudes. However, the state space of quantum
mechanics is a (projective) Hilbert-space over the complex numbers.
It is well-known that quantum mechanics can be formulated
completely equivalently with state vectors (and hence wavefunctions)
over the field of real numbers only (see however \cite{wu_operational_2021} for recent work considering imaginarity as a resource).  To this end, one splits the wave
function into its real and imaginary part, i.e.~it becomes a
two-component spinor propagated by a hamiltonian that is also split
into real and imaginary parts.  In quantum algorithms this corresponds
to spending a single additional qubit that codes whether we deal with the real- or
imaginary part of the state \cite{aharonov_simple_2003}.
We formalize the concept by the following definition:
    \begin{definition}\label{def.realif}
    The ``realification'' $\Phi$ of a quantum state $\Psi$ is a map $\Phi:\mathbb
    C^{2^{\Nb}}\rightarrow  \mathbb R^{2^{\Nb+1}}$, $\Psi_i\mapsto
    \Phi_i=(\Re \Psi_i,\Im \Psi_i)\,\forall i=0,\ldots,2^{\Nb}-1$, which obviously constitutes a
    bijective map.  Its inverse mapping $\Phi^{-1}$ is called
    ``complexification''.  
    \end{definition}
The additional grabit (or qubit) will be called ``ReIm grabit'' (or
``ReIm qubit''). $\Nb$ will denote the number of grabits 
    including the ReIm grabit, i.e.~$\Nb=n_\text{bit}+1$, where
    $n_\text{bit}$ is the number of qubits with complex-valued
    amplitudes. 
% The index $i$ of a component of a vector such as $\Psi$ will
% also be denoted as $\bm i$, meaning the binary
% representation of $i$. 
\\

Every pure quantum state $\Psi$ of $\nb$ qubits can be
represented as gradient state, as is made precise by the following
Lemma:
\begin{lemma}
For all $\Psi\in \mathcal H=\mathbb C^{(2^\nb)}$, there exists a
probability distribution $P\in (\mathbb R^+)^{(2^\Nb)}$ over b4v's
such that $\psi=\partial^{\Nb}P=a\Phi$, where $\mathbb R\ni a\ne 0$
and $\Phi\in (\mathbb R)^{(2^\Nb)}$ with $\Nb=\nb+1$ is the
realification of $\Psi$. 
\end{lemma}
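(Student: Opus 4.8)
The plan is to prove the Lemma by explicit construction, leveraging the linearity of the discretized derivative map $\partial^{\Nb}$ of \eqref{eq:psiN} and the single-grabit representation of signed amplitudes. First I would use Definition~\ref{def.realif} to replace the complex state $\Psi\in\mathbb C^{2^{\nb}}$ by its realification $\Phi\in\mathbb R^{2^{\Nb}}$ with $\Nb=\nb+1$. Since $\partial^{\Nb}P$ is by construction a real vector with $2^{\Nb}$ logical components, this reduces the claim to a purely real statement: for the given $\Phi\in\mathbb R^{2^{\Nb}}$ there exist a probability distribution $P$ over the $4^{\Nb}$ byte4 values and a scalar $a\neq0$ with $\partial^{\Nb}P=a\Phi$.

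For the construction I would view the target $a\Phi$ as a convex combination of the signed computational basis states, exactly as the single-grabit tetrahedron of Fig.~\ref{fig:grabitsquare} exhibits $\ket{0},-\ket{0},\ket{1},-\ket{1}$ as its vertices. Each signed basis state $\pm\ket{\bm i}$ is realized by the deterministic distribution supported on a single byte4 value $2\bm i+\bm\sigma$, whose sign is $(-1)^{\sum_k\sigma_k}$. Concretely, for each logical index $\bm i\in\{0,1\}^{\Nb}$ I choose a gradient string $\bm\sigma(\bm i)$ of even parity (e.g.~$\bm\sigma=(0,\ldots,0)$) when $\Phi_{\bm i}>0$ and of odd parity (e.g.~$\bm\sigma=(1,0,\ldots,0)$) when $\Phi_{\bm i}<0$, set $P_{2\bm i+\bm\sigma(\bm i)}=a\,|\Phi_{\bm i}|$, and leave all other byte4 values empty (the whole block of $\bm i$ is empty when $\Phi_{\bm i}=0$). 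Fixing $a=1/||\Phi||_1$ normalizes the result; this is just the multi-grabit analogue of the maximum-contrast distribution \eqref{eq:pmax1}.

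Verification is then routine bookkeeping. Substituting $P$ into \eqref{eq:psiN}, for each $\bm i$ only the single term $\bm\sigma=\bm\sigma(\bm i)$ survives, giving $\psi_{\bm i}=(-1)^{\sum_k\sigma_k(\bm i)}\,a|\Phi_{\bm i}|=a\,\mathrm{sgn}(\Phi_{\bm i})\,|\Phi_{\bm i}|=a\Phi_{\bm i}$, with the case $\Phi_{\bm i}=0$ handled automatically. Non-negativity of $P$ is immediate since each assigned entry is $a|\Phi_{\bm i}|\ge0$, and normalization follows from $\sum_{\bm I}P_{\bm I}=a\sum_{\bm i}|\Phi_{\bm i}|=a\,||\Phi||_1=1$. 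The constant is legitimate, $a=1/||\Phi||_1>0$, because realification preserves the Euclidean norm, so $||\Phi||_1\ge||\Phi||_2=||\Psi||_2>0$ for any nonzero state $\Psi$; in particular $a\neq0$.

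I do not expect a genuine obstacle. The assertion is an existence statement, and the underlying geometric fact is that $\partial^{\Nb}$ maps the probability simplex onto the unit $1$-norm ball of $\mathbb R^{2^{\Nb}}$ (the cross-polytope with vertices $\pm\bm e_{\bm i}$), so every nonzero direction is reached after rescaling to unit $1$-norm. The only points needing care are matching the parity of the gradient string to the sign of each real component, treating the degenerate entries $\Phi_{\bm i}=0$, and enforcing positivity and unit normalization at once, all of which the single choice $a=1/||\Phi||_1$ achieves. I would close by noting that $P$ is highly non-unique: the construction activates at most $2^{\Nb}$ of the $4^{\Nb}$ byte4 values, and the leftover freedom is precisely the gauge degree of freedom $r$ of the single-grabit parametrization.
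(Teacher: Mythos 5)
Your proposal is correct and is essentially the paper's own proof: both construct the same canonical (maximum-contrast) representation with exactly one occupied byte4 value per logical index, the gradient string's parity encoding the sign of $\Phi_{\bm i}$, and both normalize by $||\Phi||_1$ so that $\psi = \Phi/||\Phi||_1$. The only differences are cosmetic (the paper places the sign in the last grabit via $\bm\sigma=(0,\ldots,0,1)$ and normalizes an intermediate unnormalized vector $Q$, while you place it in the first grabit and normalize directly), plus your added remarks on non-uniqueness and the $a\neq 0$ justification, which the paper leaves implicit.
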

\begin{proof}
  The lemma is proven by providing a trivial representation: Set
  $Q_{2\bm i}=\Phi_{\bm i}$ if $\Phi_{\bm i}\ge 0$, $Q_{2\bm i+(0\ldots 01)}=|\Phi_{\bm i}|$ if $\Phi_{\bm i}<0$, and
  all other $Q_{2\bm i+\bm \sigma}=0$. Then ${\tilde\psi}_{\bm
    i}=\partial^{\Nb}\tilde{P}_{\bm i}=\text{sign}(\Phi_{\bm
    i})|\Phi_{\bm i}|=\Phi_{\bm i}$. Now normalize, $P=Q/||Q||_1=Q/\sum_{\bm i}|\Phi_{\bm i}|$. Then $P$ is a
  valid probability distribution and due to linearity
  \begin{equation}
    \label{eq:prlem}
    \psi_{\bm i}=(\partial^{\Nb}P)_{\bm
      i}=\frac{(\partial^{\Nb}Q)_{\bm i}}{\sum_{\bm j}|\Phi_{\bm
        j}|}=\frac{\Phi_{\bm i}}{||\Phi||_1}\,, 
  \end{equation}
  i.e.~$\psi=a \Phi$ with $a>0$.
\end{proof}
Of course, the representation used in the proof is not unique, as the
$\Phi_{\bm i}$ fix only a linear combination of $P_{2\bm i+\bm
  \sigma}$ with alternating signs for different $\bm \sigma$, which
generalizes the gauge choice observed for a single grabit.  The
prefactor $a$ will, in general, depend on the gauge choice and on
the state $\Phi$. \\

Let $U$
be a complex unitary matrix with matrix elements $U_{ij}$ in the
computational basis that propagates a complex quantum state $\Psi$ with
elements $\Psi_j$, $\Psi'_i=U_{ij}\Psi_j$ (with summation convention
over pairs of indices).  Then, coding real- and
imaginary part of $\Psi$ as in definition \ref{def.realif},
$({\Phi}_{i0},{\Phi}_{i1})\equiv (\Re \Psi_i,\Im \Psi_i)$,
the real state $\tilde{\Phi}$ is propagated according to
${\Phi}'_i=\tilde{U}_{ij}{\Phi}_j$ where each 
complex matrix element $U_{ij}$ is replaced by a real 2x2 matrix,
\begin{equation}
  \label{eq:Util}
  U_{ij}\mapsto
  (\tilde{U}_{i\nu,j\mu})_{\nu,\mu\in\{0,1\}}=\begin{pmatrix}
    \Re U_{ij}&-\Im U_{ij}\\
    \Im U_{ij}&\Re U_{ij}
    \end{pmatrix}\,.
\end{equation}
We 
always choose the ReIm qubit as least
significant bit, and numbering of basis states is always in increasing
order of the binary labels. 

\subsubsection{General phase gate}
As simplest example consider the gate that adds a relative phase
$\phi$ to the state $\ket{1}$ of a qubit, while leaving the phase of
$\ket{0}$ untouched, i.e.~
\begin{equation}
  \label{eq:Uphi}
  U_\phi=\begin{pmatrix}
    1 & 0\\
    0& e^{i\phi}
    \end{pmatrix}\mapsto \tilde{U}_\phi=\begin{pmatrix}
    \bm 1_2 & \bm 0_2\\
    \bm 0_2& \begin{matrix}
    \cos \phi & -\sin \phi\\
    \sin\phi& \cos\phi
    \end{matrix}
    \end{pmatrix}\,.
\end{equation}
I.e.~written as a real gate acting on real state vectors, the general phase
gate can be seen as a controlled real phase gate acting as a rotation
in the state space of the second (target-)qubit (=ReIm qubit) if the first
(control-)qubit is in the state $\ket{1}$.  \\

For translation to a grabit gate, consider hence first the action of
the rotation gate on the second qubit.  Assume $0<\phi\le \pi/2$
first, i.e.~$\cos\phi\ge0, \sin\phi>0$. As before, we find the
corresponding stochastic map by studying its action on the
computational basis states, e.g.~$\ket{0}\mapsto
\cos\phi\ket{0}+\sin\phi\ket{1}$, and using the linearity of convex
combinations.  Requesting maximum possible contrast we then have that
the byte4 probability vector is mapped as $(1,0,0,0)^t\mapsto
(q,0,1,0)/(1+q)$, where $q=|\cot(\phi)|$. Altogether we get the
stochastic map
\begin{equation}
  \label{eq:SRphi1}
  S_{R_\phi}^{(1)}=\frac{1}{1+q}\begin{pmatrix}
    q&0&0&1\\
    0&q&1&0\\
    1&0&q&0\\
    0&1&0&q
    \end{pmatrix}\mbox{, } 0<\phi\le \pi/2\,.
  \end{equation}
  The  stochastic process is implemented by acting on the drawn
  realizations: E.g.~if the b4v is 0, we flip it with probability
  $1/(1+q)$ to 2 and keep it with probability $q/(1+q)$, and
  accordingly for the other b4v's. The stochastic process is defined
  even for a
  single realization. 
In the other quadrants, the signs of $\cos\phi$ or $\sin\phi$
change. A sign change of $\cos\phi$ is implemented by moving $q$ in
$S_{R_\phi}^{(1)}$ one row up or down while keeping the same blv.  E.g.~for $\pi/2<\phi\le \pi$
we have 
\begin{equation}
  \label{eq:SRphi2}
  S_{R_\phi}^{(2)}=\frac{1}{1+q}\begin{pmatrix}
    0&q&0&1\\
    q&0&1&0\\
    1&0&0&q\\
    0&1&q&0
    \end{pmatrix}\mbox{, } \pi/2<\phi\le \pi\,,
\end{equation}
and in the remaining quadrants
\begin{equation}
  \label{eq:SRphi34}
  S_{R_\phi}^{(3)}=\frac{1}{1+q}\begin{pmatrix}
    0&q&1&0\\
    q&0&0&1\\
    0&1&0&q\\
    1&0&q&0
  \end{pmatrix}\mbox{, }
   S_{R_\phi}^{(4)}=\frac{1}{1+q}\begin{pmatrix}
    q&0&1&0\\
    0&q&0&1\\
    0&1&q&0\\
    1&0&0&q
    \end{pmatrix}
\end{equation}
for $-\pi<\phi\le -\pi/2$ and $-\pi/2<\phi\le 0$, respectively.\\

The question of normalization becomes relevant for
controlled
gates. 
E.g.~$\ket{\Phi}=\ket{c,t}=(1,0,1,0)^t/\sqrt{2}\stackrel{\tilde{U}_\phi}{\mapsto}(1,0,\cos\phi,\sin\phi)^t/\sqrt{2}$.
I.e.~the 2-norm of the vector in the $c=1$ sub-space $\{\ket{10},\ket{11}\}$ is
conserved, whereas the stochastic map does not conserve the 2-norm,
and hence modifies the ratio of the 2-norms in the $c=1$ vs the $c=0$
subspace. Because
the amplitude of the $c=1$ subspace cannot be increased beyond
what the stochastic matrices $S^{(i)}_{R_\phi}$ deliver, in order to
remedy this we reduce the
amplitude of the $c=0$ subspace instead using the gauge degree of
freedom. For this we make an ansatz for an amplitude reduction gate
acting as stochastic map on the control-grabit $c$, but activated
only if $c=0$. It can hence be restricted to the $\ket{0},-\ket{0}$
subspace, and therefore represented as
\begin{equation}
  \label{eq:A}
  R_2=\begin{pmatrix}
    1-r_0& r_0 \\
    r_0 & 1-r_0\\
    \end{pmatrix}\,.
\end{equation}
% We request that the amplitudes of $0$ and $1$ of the control-grabit
% state are reduced by the same 
% amount. %  i.e.~with $\bm P'=A \bm P$,
% % $(P_0-P_1)/(P_2-P_3)=(P_0'-P_1')/(P_2'-P_3')$. \qdq ??
% This leads to
% $r_0=r_1$.  
The reduction of the 2-norm of
states in the $c=0$ subspace must be by the same amount as of those in the
$c=1$ subspace due to the controlled rotation, which is $||\psi'||_2={\mathcal N}||\psi||_2$
with $\mathcal N=\sqrt{1+q^2}/(1+|q|)$, and gives
% the two-qubit state
% $||\psi'||_2=||\psi||_2=\mathcal N$ leads to 
$r_0=(1-\mathcal N)/2$.
  Altogether the general-phase gate ${U}_\phi$ on a single qubit becomes a
  real gate $\tilde{U}_\phi$ on two qubits that is emulated with a 16
  $\times 16$ stochastic map on two grabits of the form
  \begin{equation}
    \label{eq:Sfull}
    S_{R_\phi}^{\text{full},i}=R_2\otimes \bm 1_4\oplus \bm 1_2\otimes S_{R\phi}^{(i)}\,,
  \end{equation}
  It has the clear
interpretation of reducing the amplitude of grabit components in the
$\pm \ket{0}$ subspace of the first grabit when the conditional rotation of
the second grabit is not activated, while leaving components of the
$\pm 1$ subspace of the first grabit
alone but activating the conditional rotation of the second grabit iff
the first grabit is in state $\pm \ket{1}$.
Since $r_0<1/2$, the sign of the components of the grabit-state is
kept. The amplitude reduction gate \eqref{eq:A} is implemented as a
stochastic process where $0\leftrightarrow 1$ with probability $r_0$,
whereas with probability $1-r_0$ the two b4v's are not flipped, and
correspondingly for $2\leftrightarrow 3$. 
\\ 
If we have more than two grabits (i.e.~more than one complex qubit)
this construction is still sufficient, i.e.~no additional amplitude
reductions need to be introduced: the full Hilbert space decomposes into
the two subspaces with the control grabit either in states $\pm \ket{0}$ or
$\pm \ket{1}$.  If we have controlled unitaries controlled by more
than one control, the construction works accordingly, i.e.~in the
subspace where the unitary is not activated the 2-norm of the components
of the state vector shall be reduced to the same value resulting from the
unitary applied in the complementary subspace.  For this it is enough
to only act with the amplitude reduction gate on the grabits involved
in the control.

\subsubsection{Controlled general phase gate}
The controlled general phase gate on two complex qubits translates
first to a doubly controlled rotation gate on the Re/Im qubit if
$c=t=1$, i.e.~a three-qubit gate. The stochastic map $S_{c-R_\phi}$ on
three grabits 
that emulates it has an $S_{R_\phi}$ block that acts on the Re/Im grabit in the four subspaces
corresponding to $\pm\ket{1}_c\pm\ket{1}_t$ of the control and target grabit, and amplitude reductions in the 
complementary subspaces.  I.e.~we apply the rotation gate to the
ReIm-grabit iff the blv's $c=t=1$, and apply an amplitude
reduction gate to the states $\ket{c,t}$ otherwise.  It is enough to implement
the amplitude reduction on a single grabit, e.g.~for $c=0$ implement
it on $c$, which will reduce the amplitudes of all states
$\ket{c,t}=\ket{\pm 0,\pm 0}$ and $\ket{\pm 0,\pm 1}$. For the state
$\ket{10}$ it can be implemented on $\ket{t}$. Altogether we get for
the controlled general phase gate the three-grabit stochastic map
  \begin{equation}
    \label{eq:Sfull}
    S_{c-R_\phi}^{\text{full},i}=R_2\otimes \bm 1_{16}\oplus R_2\otimes \bm 1_{4}\oplus \bm
    1_2\otimes S_{R_\phi}^{(i)}\oplus R_2\otimes \bm 1_{4}\oplus \bm
    1_2\otimes S_{R_\phi}^{(i)}\,. 
  \end{equation}
Non-trivial stochastic matrices, i.e.~matrices that cannot be reduced to a permutation of b4vs, will be called ``interference-generating stochastic matrices'' in the following, and their set denoted by $\mathcal I$. From the stochastic matrices considered so far, only the Hadamard gate, the general phase gate, and the controlled general phase gate  (for phases $\varphi \ne k \pi$, $k\in \mathbb Z$) are in $\mathcal I$. 

  \subsection{Emulation of a quantum algorithm}
As is well-known, a universal gate set $\mathcal G$ that allows one to approximate any unitary on $\mathbb C^{2^\nb}$ with arbitrary precision is given by
the Hadamard-gate, the $T$-gate (which is the gate \eqref{eq:Uphi} with
$\phi=\pi/4$), and the CNOT gate (see e.g.~\cite{Nielsen00} p.189ff).  
  From the stochastic emulation of the elementary gates in the previous
subsection, we obtain the following theorem:
  \begin{theorem}\label{thm1}
Let $U$ denote a pure-state quantum algorithm, i.e.~a
unitary transformation on the Hilbert space $\mathbb C^{2^\nb}$
composed from a sequence of gates $U\po{i}$
from the universal gate set 
$\mathcal G$, i.e~$U=U\po{\Ng}U\po{\Ng-1}\ldots U\po{1}$.  Let
$\Psi\po{0}$ be the input state, and $\Psi\po{N_g}$ the output state, $\Psi\po{N_g}=U
\Psi\po{0}$. Denote with $\Phi\po{0}$ and $\Phi\po{N_g}$ the realifications of $\Psi\po{0} $ and
$\Psi\po{N_g}$, respectively, in $\mathbb R^{2^\Nb}$ with $\Nb=\nb+1$.  Let $P\po{0}$ be
a probability distribution over b4vs of $\Nb$  grabits, such
that $\psi\po{0}=\partial^{\Nb}P\po{0}=a\po{0}\Phi\po{0}$ 
with $a\po{i}\in\mathbb R\setminus\{0\}$.  
Then $\psi\po{N_g}\equiv\partial^{\Nb}P\po{N_g}$ with $P\po{N_g}=S\,P\po{0}$ and $S=S^{(N_g)}S^{(N_g-1)}\ldots S^{(1)}$, where $S^{(i)}$ are the stochastic matrices corresponding to gates $U^{(i)}$ (cf.~eqs.(\ref{eq:SH},\ref{eq:Sfull}) and $S_\text{CNOT}$ described in Sec.\ref{Sec.CNOT}), satisfies $\psi\po{N_g}=a\po{N_g}\Phi\po{N_g}$ with $a\po{N_g}\in\mathbb R\setminus\{0\}$. 
  \end{theorem}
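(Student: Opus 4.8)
The plan is to prove the statement by induction on the number of gates $\Ng$, driven by a single ``intertwining'' identity relating the discretized derivative $\partial^\Nb$ of \eqref{eq:psiN}, the stochastic map $S\po{i}$, and the realified gate. First I would record that $\partial^\Nb$ factorizes as a tensor power $\partial^\Nb=d^{\otimes\Nb}$ of the single-grabit derivative $d=\left(\begin{smallmatrix}1&-1&0&0\\0&0&1&-1\end{smallmatrix}\right):\mathbb R^4\to\mathbb R^2$. The target is then, for each gate $g$ in $\mathcal G=\{H,T,\mathrm{CNOT}\}$, an operator identity
\[
  \partial^\Nb S_g = a_g\,\tilde U_g\,\partial^\Nb,\qquad a_g\in\mathbb R\setminus\{0\},
\]
with $\tilde U_g$ the realification of $g$ built from \eqref{eq:Util} and the scalar $a_g$ \emph{independent of the state}. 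Since each gate acts nontrivially on at most three grabits and as the identity on the rest, and since $d\,\bm 1_4=\bm 1_2\,d$, the mixed-size rule $(A\otimes C)(B\otimes D)=(AB)\otimes(CD)$ collapses this global identity to a finite check on the grabits actually involved.

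For the permutation gates the check is immediate. Because $S_X$ merely relabels byte4 values $0\!\leftrightarrow\!2$, $1\!\leftrightarrow\!3$ while preserving the gradient value, a one-line computation gives $d\,S_X=X\,d$, and similarly $d\,S_Z=Z\,d$, both with $a=1$. The Hadamard then follows from $S_H=(S_X+S_Z)/2$ in \eqref{eq:SH} by linearity: $d\,S_H=\tfrac12(X+Z)\,d=\tfrac1{\sqrt2}H\,d$, i.e.~$a_H=1/\sqrt2$. The same relabeling argument applied to the two-grabit permutation $S_\mathrm{CNOT}$ yields $d^{\otimes2}S_\mathrm{CNOT}=\mathrm{CNOT}\,d^{\otimes2}$ with $a=1$.

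The delicate case, which I expect to be the main obstacle, is the $T$-gate, since its map \eqref{eq:Sfull} is not a permutation and does not conserve the $2$-norm. Here I would evaluate $d\,S_{R_\phi}^{(i)}$ directly on the ReIm grabit and verify it equals $\mathcal N\,\tilde R_\phi\,d$, where $\tilde R_\phi$ is the rotation block of \eqref{eq:Uphi} and a short trigonometric simplification gives $\mathcal N=\sqrt{1+q^2}/(1+|q|)=1/(\sin\phi+\cos\phi)$; separately I would check that the amplitude-reduction gate $R_2$ scales the $\psi$-amplitude of the inactive ($c=0$) sector by $1-2r_0$, which for $r_0=(1-\mathcal N)/2$ equals the \emph{same} $\mathcal N$. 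This coincidence is the crux: the reduction factor in the inactive control subspace was engineered to match the norm loss of the rotation in the active subspace, so that $d^{\otimes2}S_{R_\phi}^{\mathrm{full},i}=\mathcal N\,\tilde U_\phi\,d^{\otimes2}$ holds with a \emph{single} nonzero scalar across both sectors; were the two factors unequal, the prefactor would differ between sectors and $\psi$ would fail to be a scalar multiple of $\Phi$. The four quadrant variants $S_{R_\phi}^{(i)}$ must each be matched against the appropriate signs of $\cos\phi$ and $\sin\phi$, but all return $a_{R_\phi}=\mathcal N$.

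With the intertwining identity established, the theorem follows by induction. The base case $\psi\po{0}=\partial^\Nb P\po{0}=a\po{0}\Phi\po{0}$ is the hypothesis, whose consistency is guaranteed by the preceding Lemma. For the step, assume $\psi\po{k}=a\po{k}\Phi\po{k}$; then $\partial^\Nb P\po{k+1}=\partial^\Nb S\po{k+1}P\po{k}=a_{k+1}\,\tilde U\po{k+1}\partial^\Nb P\po{k}=a_{k+1}a\po{k}\,\tilde U\po{k+1}\Phi\po{k}$. Since realification is functorial, $\tilde U\po{k+1}\Phi\po{k}=\Phi\po{k+1}$, so $\psi\po{k+1}=a\po{k+1}\Phi\po{k+1}$ with $a\po{k+1}=a_{k+1}a\po{k}$. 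As every $a_g\neq0$, the accumulated prefactor $a\po{\Ng}=a\po{0}\prod_{i=1}^{\Ng}a_i$ is nonzero, as claimed; note also that each $S\po{i}$ is genuinely column-stochastic, so every $P\po{k}$ stays a valid probability distribution, and the gauge freedom in $P\po{0}$ affects only $a\po{0}$.
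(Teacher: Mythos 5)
Your proof is correct and follows essentially the same route as the paper's: the paper likewise reduces the theorem to the per-gate statement that each stochastic map $S\po{i}$ propagates $\psi=\partial^{\Nb}P$ into $a\,\tilde{U}\po{i}\psi$ with $a\ne 0$ (asserted there by reference to the gate constructions) and then iterates over the gate sequence exactly as in your induction. The only difference is that you verify the per-gate intertwining identities $\partial^{\Nb}S_g=a_g\,\tilde{U}_g\,\partial^{\Nb}$ explicitly --- including the crucial matching of the amplitude-reduction factor $1-2r_0=\mathcal{N}$ with the rotation's norm factor for the phase gate --- whereas the paper treats these facts as already established in its gate-construction sections.
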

  \begin{proof}
For all these gates, we have found corresponding stochastic maps that
we can now iterate in the sequence defined by the quantum
algorithm. I.e.~for an arbitrary $U\po{i}\in\mathcal G$ with 
realification $\tilde{U}^{(i)}$ there is a corresponding $S\po{i}$ such
that with $\psi\po{i-1}=a\po{i-1}\Phi\po{i-1}=\partial^{\Nb}P\po{i-1}$
(and $a\po{i-1}\ne 0$)
and $P\po{i}=S\po{i}P\po{i-1}$, we have 
$\psi\po{i}=\partial^{\Nb}P\po{i}=a\po{i}\tilde{U}\po{i}\Phi\po{i-1}=(a\po{i}/a\po{i-1})\tilde{U}\po{i}\psi\po{i-1}$,
$i=1,\ldots,\Ng$. Iterating this, we get
\begin{eqnarray}
  \label{eq:iterate}
  \psi\po{\Ng}&=&\partial^\Nb
                  P\po{\Ng}=a\po{\Ng}\Phi\po{\Ng}=a\po{\Ng}\tilde{U}\po{\Ng}\Phi\po{\Ng-1}\nonumber\\
  &=&\frac{a\po{\Ng}}{a\po{\Ng-1}}\tilde{U}\po{\Ng}\psi\po{\Ng-1}\nonumber\\
  &=&\frac{a\po{\Ng}}{a\po{\Ng-1}}\frac{a\po{\Ng-1}}{a\po{\Ng-2}}\tilde{U}\po{\Ng}\tilde{U}\po{\Ng-1}\psi\po{\Ng-2}\nonumber \\
  &=&\ldots=\frac{a\po{\Ng}}{a\po{0}}\tilde{U}\po{\Ng}\tilde{U}\po{\Ng-1}\ldots\tilde{U}\po{1}\psi\po{0}\nonumber % \\
  % &=&{a\po{\Ng}}\tilde{U}\po{\Ng}\tilde{U}\po{\Ng-1}\ldots\tilde{U}\po{1}\Phi\po{0}
      \,,
\end{eqnarray}
i.e.~up to an overall prefactor, the grabit state $\psi\po{0}$ is propagated with exactly the same quantum algorithm $\tilde U=\tilde{U}\po{\Ng}\tilde{U}\po{\Ng-1}\ldots\tilde{U}\po{1}$ as the realification $\Phi\po{0}$: $\Phi\po{Ng}=\tilde{U}\po{\Ng}\tilde{U}\po{\Ng-1}\ldots\tilde{U}\po{1}\Phi\po{0}$.  At the same time we have $P\po{\Ng}=S\po{\Ng}\ldots S\po{1}P\po{0}$, i.e.~the 
stochastic map can indeed be composed from the stochastic maps
corresponding to the universal gates discussed before. 
  \end{proof}

\subsection{Demonstration with two simple quantum algorithms}
  We first illustrate the basic functioning of the gradient-state approach at the
example of two quantum algorithms, before coming back to more advanced ones in section \ref{Sec.qalgo}.

\subsubsection{Deutsch-Josza algorithm}
The Deutsch-Josza algorithm (DJA) is maybe the simplest quantum algorithm
that demonstrates a quantum advantage over classical algorithms.  In
addition, one can clearly point to the origin of the advantage, namely
many-body interference.  The algorithm allows one to decide whether a
Boolean function $f:\{0,1\}^{\otimes \Nb}\rightarrow \{0,1\}$ is constant or
balanced, assuming that it has only these two options.  Here, balanced
means that $f=1$ for as many inputs $x$ as for which 
$f=0$. Functions can be evaluated reversibly by keeping the inputs,
and making the map bijective, via $f: x,y\mapsto x,y\oplus f(x)$.
Classically, one has to evaluate $f$ at least once and at the most
$2^{\Nb-1}+1$ times in order to decide, whereas the algorithm by
Deutsch-Josza succeeds with probability one with a single reversible
evaluation of $f$, applied to a superposition of all computational
basis states. The algorithm has 5 steps: 0. Preparation of input states
$\ket{0}^{\otimes \Nb}\otimes\ket{1}$. 1. Application of Hadamard gates
on all qubits. 2. Reversible calculation of $f$, with the first $\Nb$
qubits untouched, the last one receiving $y\oplus
f(x)$. 3. Application of Hadamard gates on the first $\Nb$ qubits and
4. their measurement in the computational basis. \\

We illustrate the algorithm for the simplest
case of two grabits and $f(x)=x$, i.e.~$f$ is balanced. The
probability distribution over the 16 possible byte4-values after the
first two
steps above % are $\bm P_0=(1,0,0,0)\otimes(0,0,1,0)$, $\bm
% p_1=(1/2,0,1/2,0)\otimes(1/2,0,0,1/2)$. In terms of probability
% distributions over byte4 values
 which we denote with $\ket{I}\rangle$
 for each grabit, and $\ket{IJ}\rangle$ for two grabits, %we can write
 are
\begin{eqnarray}
  \label{eq:p1}
\ket{P\po{0}}\rangle&=&\kett{02}\,,\\  
\ket{P\po{1}}\rangle&=&(\ket{00}\rangle+\ket{03}\rangle+\ket{20}\rangle+\ket{23}\rangle)/4\,.
\end{eqnarray}
The function evaluation $f(x)$ permutes byte4 values with the
condition of {\it i.)} giving the correct logical values and {\it
  ii.)} preserving the sign of the state.  The concrete example
$f(x)=x$ translates to the logical map $x,y\mapsto x,y\oplus x$.  So
e.g.~logical 10 is mapped to 11, which means that byte4 values 20 are 
mapped to 22, and 23 to 21. Thus, after the function evaluation, we have   
\begin{equation}
  \label{eq:p2}
\ket{P\po{2}}\rangle=(\ket{00}\rangle+\ket{03}\rangle+\ket{22}\rangle+\ket{21}\rangle)/4\,.
\end{equation}
After the last Hadamard gate on the first grabit, we find
\begin{equation}
  \label{eq:p2}
  \ket{P\po{3}}\rangle=(\ket{00}\rangle+\ket{20}\rangle+\ket{03}\rangle+\ket{23}\rangle
+\ket{32}\rangle+\ket{02}\rangle+\ket{31}\rangle+\ket{01}\rangle
  )/8\,, 
\end{equation}
which translates to a two-grabit  wavefunction
\begin{eqnarray}
  \label{eq:psi3}
  \ket{\psi\po{3}}&=&(\ket{00}+\ket{10}+\ket{0}(-\ket{1})+\ket{1}(-\ket{1})+(-\ket{1})\ket{1}+\ket{0}\ket{1}+(-\ket{1})(-\ket{0})+\ket{0}(-\ket{0})/8\nonumber\\
              &=&(0\ket{00}+0\ket{01}+2\ket{10}-2\ket{11})/8\nonumber\\
                  &=&(1/2)\ket{1}\ket{-}\,,
\end{eqnarray}
with $\ket{-}=(\ket{0}-\ket{1})/2$.
So indeed the state is propagated correctly, and we can extract $\psi\po{3}$
from the final probability distribution. For numerical implementation the ReIm grabit is not
needed, as the state remains real in the computational basis at all
times, i.e.~we can set $\Nb=\nb$ here.
A measurement that gives
logical values distributed according to a Born rule (be it Born-1 rule
or the usual Born-2 rule) would lead to a measurement outcome ``1''
with probability one for the first grabit.  In this sense, the
stochastic emulation gives the answer ``$f(x)$ is balanced'' with a
single application of the gate that evaluates the function $f$.
However, in the numerical implementation, $f(x)$ has to be evaluated
for each realization (i.e.~$\Nl$ many times!), and, furthermore, $\Nl$
has to be large in order to have near-perfect cancellation of the
amplitude on logical state $\ket{0}$ of the first grabit.  Secondly,
the physical probability distribution at the end
of the algorithm is a uniform distribution over all logical states,
\begin{equation}
  \label{eq:p3t}
  \kett{\tilde{p}_3}=(\kett{00}+\kett{01}+\kett{10}+\kett{11})/4\,.
\end{equation}
These issues will be considered in section \ref{Sec.Refresh}, but
before we make them more quantitative with a numerical emulation of
the Bernstein-Vazirani algorithm. 
% The latter issue can be remedied by a global refresh at the end of the
% algorithm (note that this is the first stage where a refreshment is
% useful as all previous states are interference-free). In the limit of
% $\Nl\to\infty$ it gives correctly
% $\kett{\tilde{p}_4}=(\kett{20}+\kett{23})/2$, i.e.~the first grabit
% will then be always found in logigal state 1. However, the problem of
% needing many realizations is not solved by the refreshment, as it
% requires estimating the
% state from a histogram which works only for a large number of
% realizations.  

\subsubsection{Bernstein-Vazirani algorithm}
The Bernstein-Vazirani algorithm can be considered a generalization of
the Deutsch-Josza algorithm.  It allows one to determine a linear
function $f(x)=a\cdot x \mod 2$ in a single shot, where $a$ is a
binary string on $\nb$ qubits.   The quantum circuit is exactly as for
the Deutsch-Josza algorithm. At the end of the algorithm, the first
register of $\nb$ qubits contains $a$ (indeed the shown example
of the DJA corresponds to $a=1$, and we saw that the first qubit is in
state $\ket{1}$ at the end of the algorithm).  One might still apply a
final Hadamard on the last 
grabit, in which case we expect a single peak in the final state,
$\ket{\psi}_3=\ket{a}\ket{1}$. Fig.\ref{fig:BV} shows a stochastic
emulation with $\nb=3, \Nl=10000$ for $a=01_2$ (the subscript $_2$
indicates binary representation). We hence
expect a peak 
of $\ket{\psi}_3$ at logical values $011_2=3$, which is found
indeed. The figure also show an analysis of the error-probability as function of
$\Nl$ for different values of $\nb$, based on identifying the
logical value $i$ with the largest absolute amplitude $|\psi_i|$ at
the end of the algorithm as output $a'$ of the algorithm. 500 different
realizations of the algorithm were run with random initial 
$a$ at given $\nb$ (the $\nb$ indicated in the legend refers to the
total number of grabits, including the last one initialized in
$\ket{1}$, so the largest integer $a$ for $\nb=7$ is 64), and the
error probability estimated as relative frequency over all
realizations of the algorithm and initial values of $a$ where in the
end $a'\ne a$.  We see that
the probability 
that this decision rule leads to the wrong value of $a$ decreases roughly exponentially with 
the number of realizations once $\Nl$ is sufficiently large. 
\begin{figure}[t]
  \includegraphics[width=0.4\textwidth]{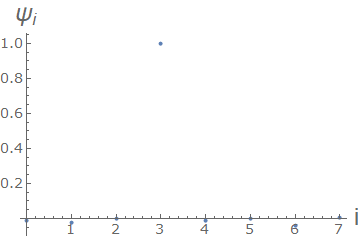}
 \includegraphics[width=0.4\textwidth]{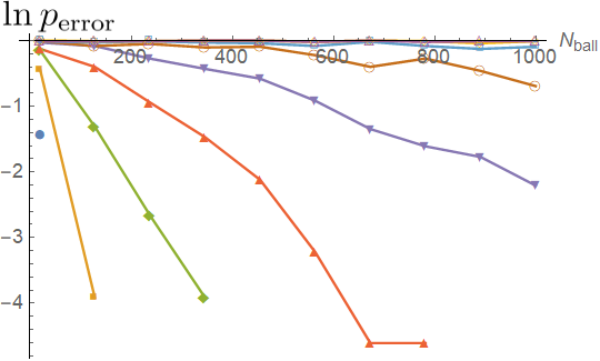}
  \caption{Stochastic emulation of Bernstein-Vazirani algorithm. {\em
      Left:} Estimate $\hat\psi_i$ of final state $\ket{\psi^{(3)}}$ for $\nb=3$ (without using a    ReIm grabit),
    $\Nl=10000$, $a=01_2=1$, including a final Hadamard gate on the
    third grabit initialized in $1$, i.e.~the expected output peak of
    the 3-grabit $\ket{\psi}$ is at $011_2=3$. %\qdq{\em Middle:}
       {\em Right:} Error
    probability as estimated from 100 random values
    of $a$ as function of $\Nl$ for    $\nb=2,\ldots, 10$ from bottom to top in a lin-log plot. Missing values correspond to  vanishing error of the estimate (=relative frequency); lines are guides to the eye.
          \label{fig:BV} }
\end{figure}

\section{Refreshments}\label{Sec.Refresh}
The examples of the Deutsch-Josza and the Bernstein-Vazirani quantum
algorithm illustrate that while the propagation of the multi-grabit wave
function works correctly, Theorem \ref{thm1} is not enough for
obtaining a useful stochastic algorithm from a quantum algorithm. 
Indeed, extracting the final state as $\psi=\partial^{\Nb}P$ from the
final probability distribution and finding its maximum is clearly not
efficient when the number of qubits 
becomes large. Rather we want that the flow of physical probability
follows the quantum mechanical wavefunction and leads to a peak on
the correct result in the end, such that the correct result is found
with high probability, just as with a quantum computer.  Already at the
example of the two-grabit Deutsch-algorithm we can see, however, that the
physical probability $\tilde{p}_1$ for finding the first grabit in state 1 is not
1. In fact, the final physical probability distribution is evenly
distributed over all computational basis states.\\

This issue can be seen most clearly already at the
example of the simplest non-trivial quantum algorithm showing
interference: $U=H^2=\bm 1_2$. Starting with $\ket{0}$, the final
state is $\ket{0}$ as well. The byte4-value probability distribution of
the grabit-emulation of this circuit is easily found to be $\bm
P=(1/2,0,1/4,1/4)^t$, from which results the final grabit-state
$\ket{\psi}=\ket{0}/2$, i.e.~up to normalization the correct
quantum state is perfectly reproduced. But the physical probability
distribution is $\tilde{\bm p}=(1,1)/2$, i.e.~there is no contrast at all.

This can be generalized
to $n$ iterations of $H^2$ and reveals that there is another
issue: With $S_H$ from \eqref{eq:SH} we find that the
initial byte4-probability distribution $\bm P_0=(1,0,0,0)$
corresponding to initial state $\ket{0}$ is propagated to
\begin{equation}
\label{eq:pH2n}
  \bm P^{H^{2n}}=\frac{1}{4}
  \begin{pmatrix}1\\1\\1\\1
  \end{pmatrix}+\frac{1}{2^{n+1}}
  \begin{pmatrix}1\\-1\\0\\0
  \end{pmatrix}\,,
\end{equation}
which  translates to a grabit-state
$\ket{\psi^{H^{2n}}}=\frac{1}{2^{n}}\ket{0}$. So while the correct
quantum state is reproduced up to normalization, its amplitude
decreases exponentially with the number of Hadamard-gates. If the
estimation of the quantum state is based on a finite number of
realizations, this means
that the number of realizations that effectively contribute to
$\ket{\psi^{H^{2n}}}$ decreases exponentially, and hence the
statistical error, measured as the deviation in 2-norm of the
multi-grabit state estimated from the histogram of realized byte4 values (and normalized to 2-norm equals one) from the  
correct quantum mechanical state, 
increases exponentially with the number of gates $n_\text{gate}$.
This is confirmed numerically in Fig.\ref{fig:expinc}a. In Appendix \ref{app:Destr} we show that the reduction of the effectively contributing samples is in fact a measure of destructive interference and that for all  quantum gates in the universal set considered here and all input states, a reduction by a factor $1/2$ is the worst case.\\

These issues will be addressed now.  A partial solution 
consists in using the gauge-degree of freedom in the parameterization
of the grabit-states to enforce after each interference-generating gate the Born-1 rule, a
process that will be called ``refreshing''.  

% Let us
% first study this in detail for a single grabit, and then for the
% multi-grabit state.
\subsection{General strategy}
We are looking for a process that maps $\bm P$ to $\bm P'$ (implying a
map $\psi\mapsto \psi'$) such that the following two conditions
(called ``refreshment conditions'') are satisfied:
\begin{enumerate}
\item
  \begin{equation}
    \label{eq:1.3}
\psi'=d\cdot \psi\,,    
  \end{equation}
 i.e.~up to a global factor $d\ne 0$ the
  multi-grabit state remains unchanged.  This is crucial for being
  able to continue propagating the correct multi-grabit state by the
  quantum algorithm;
\item and
  \begin{equation}
    \label{eq:1.4}
\tilde p'=c|\psi| =(c/|d|) |\psi'|\,,
  \end{equation}
with $c>0$, which
  ensures that a peaked 
  $\psi$ at the end of the quantum algorithm implies a peaked physical
  probability distribution $\tilde p'$. 
\end{enumerate}
As mentioned before, from a quantum mechanics perspective, one would
prefer $\tilde p'=|\psi'|^2$, but enforcing the Born-1 rule turns out
to be simpler and is sufficient for having an emulation of the
quantum algorithm that succeeds with high probability if the version
with the Born-2 rule does. \\
% While both the amplitudes $\psi_{\bm i}$ and the physical
% probabilities $\tilde{p}_{\bm i}$ are linear in $P_{2\bm i +\bm
%   \sigma}$, the absolute value or any other function that maps $\pm
% \psi_{\bm i}$ to $\tilde{p}_{\bm i}>0$ with the refreshment conditions
% satisfied makes the required mapping $\bm P\mapsto \bm P'$ necessarily
% non-linear. Using the Born-1 rule helps, however, in allowing a
% process that is piece-wise linear.  
In terms of the number of free
parameters, we have $4^{N_\text{bit}}-1$ free byte4 probabilities
$P_{I}$.  Eqs.(\ref{eq:1.3},\ref{eq:1.4}) are $2^\Nb$ equations each
for the amplitudes, but also introduce two new real variables
$c,d$. So all together there are $4^\Nb-2^{\Nb+1}+1$ free
variables. Summing \eqref{eq:1.4} over all $i$, we find
$c=1/\bar\psi$, with 
$\bar\psi\equiv\sum_{i}|\psi_{i}|$. % , i.e.~the
% sum of the absolute values of the old multi-grabit state over all
% computational basis states.
For $\Nb=1$ there is only one free
variable.

\subsection{Refreshing a single grabit}\label{Sec.grabitrefresh}
In the case of a single grabit, eqs.(\ref{eq:1.3},\ref{eq:1.4}) can be
solved immediately. First consider the
case $\psi_i\ge 0$ ($i=0,1$). Adding and subtracting
eq.(\ref{eq:1.3}) for $i=0,1$ then leads to
\begin{eqnarray}
  \label{eq:pfs}
  P'_{2i}&=&\frac{1}{2}(\frac{1}{\bar \psi}+d)\psi_i,\\
P'_{2i+1}&=&\frac{1}{2}(\frac{1}{\bar \psi}-d)\psi_i\,,                
\end{eqnarray}
where $d$ is the remaining free parameter.  Positivity of the
probabilities implies a range for $d$, 
\begin{equation}
  \label{eq:maxd}
  \max(-\frac{1}{\bar
    \psi},-\left(\frac{2}{\psi_i}-\frac{1}{\bar\psi}\right))\le d \le \min(\frac{1}{\bar
    \psi},\left(\frac{2}{\psi_i}-\frac{1}{\bar\psi}\right))\,.
\end{equation}
Hence, $d_\text{max}=1/\bar\psi=1/(\psi_0+\psi_1)$. In the case of a
$\psi_i<0$, it is easily checked that we can just replace $\psi_i\to
|\psi_i|$ everywhere in the calculation of $d_\text{max}$, which leads
to the general result $d_\text{max}=1/\bar\psi$, and 
\begin{equation}
  \label{eq:pti}
  \tilde{p}_i'=\frac{|\psi_i|}{|\psi_0|+|\psi_1|}=\frac{|P_{2i}-P_{2i+1}|}{|P_0-P_1|+|P_2-P_3|}\,.
\end{equation}
Maximizing $d$ under the constraint of positivity of the byte4 probabilities
corresponds exactly to the gauge choice of maximum contrast. We see
from \eqref{eq:pti} that for any $\psi_i\ne 0$ this corresponds to having always either
$P_{2i}=0$ or  $P_{2i+1}=0$.  This has the important physical
interpretation that {\em refreshing means to make the state
``interference-free''}, i.e.~the amplitude corresponds not
to the difference of two byte4 probabilities anymore, but just to one
of them, the other being zero.  This has the desired consequence that
the physical 
probability is equal to the probability obtained from the Born-1 rule.\\

{For the numerical implementation},
it suffices to refresh after each gate $\in \mathcal I$, i.e.~each gate that can potentially generate interference. We analyzed two different versions of a refreshment gate. Both of them need the estimation of the grabit state before refreshment and base the subsequent transformations on it, which renders refreshments non-linear in $\bm P$.  The non-linearity in $\bm P$  is also implied by the fact that refreshments map many different b4v probability distributions to the same interference-free b4v probability distribution. It is a step that reduces entropy. 

A first implementation works in three steps:  
\begin{enumerate}
\item Estimate the state from the histogram $R$ of realized b4v's, i.e.~evaluate $\hat{\psi}_{i}$ from \eqref{eq:psiest}.  As the individual $R_I$ converge for $\Nl\to\infty$ to the $P_I$ that determine their distribution, i.e.~$R_I\stackrel{\Nl\to\infty}{\longrightarrow}P_I$, also $\hat{\psi}\stackrel{\Nl\to\infty}{\longrightarrow} \psi$. 
\item For each logical value $i$, determine $\text{sign}
  (\hat\psi_i)$. A b4v $I=2i+1$ is  called majority value (and $I=2i$ minority value) 
 if $R_{2i+1}>R_{2i}$, corresponding to 
  $\text{sign} (\hat\psi_i)<0$.  For $\text{sign} (\hat\psi_i)\ge 0$, $I=2i$ is the majority value and $I=2i+1$ the 
  minority value. Move all realizations
  corresponding to minority values to the corresponding majority
  values, a step called ``minority relocation'' (MIRE).  MIRE
  changes the probability distribution over logical 
  values $\tilde{p}_i\mapsto {\tilde{p}^M}_i$,  and the state $\psi$ to
  ${\psi}^M$ but enforces
  $\tilde{p}^M_i=|{\psi^M}_i|$.
  \item Move realizations between the remaining byte4 values $I=2i+1$
    (or $I=2i$) until the original estimated ratios between probability
    amplitudes are restored as closely as possible, i.e.~$\hat\psi^M_i\mapsto \psi'_i\simeq d \,\hat\psi_i$ 
    (a step called ROAR=''restoration of original amplitude ratios'').  
  \end{enumerate}
Note that a perfect restoration of
    the original amplitude ratios may not always be possible, as the
    histograms lead to integer-ratios for the estimates of $\psi_{i}$. E.g.~for $\Nl=11$ and a histogram
$\bm R=(4,0,4,3)^t/11$, we estimate $\ket{\psi}\propto (4,1)$.  After
MIRE, $\bm R^\text{M}=(4,0,7,0)^t/11$.  With ROAR, the closest we can get to the estimated
$\ket{\psi}$ is $\bm R^\text{M}=(9,0,2,0)^t/11$, giving $\ket{\psi}\propto
(4.5,1)$. However, this type of error is of order $1/\Nl$ and hence becomes negligible for $\Nl\to\infty$.
  The  stochastic map that implements this first type of refreshments will be denoted by $Rf_1$.  

In Fig.\ref{fig:expinc} we see at the example of $H^{n_H}$ on a single qubit that when the refreshment gate is
implemented after each Hadamard gate, the error increases only  
$\propto \sqrt{n_H}$ compared to the
exponential increase without the refreshment gates. As the Hadamard gates keep the state real at all times, we can compare with $\Psi$ directly and do not need to consider the realification $\Phi$. The decrease of the overall error compared to exact unitary propagation is $\propto
1/\sqrt{\Nl}$, for both using the refreshment gates or not.  Both 
behaviors can be understood analytically, see  Appendix \ref{app:A}.

\begin{figure}[t]
  \includegraphics[width=0.45\textwidth]{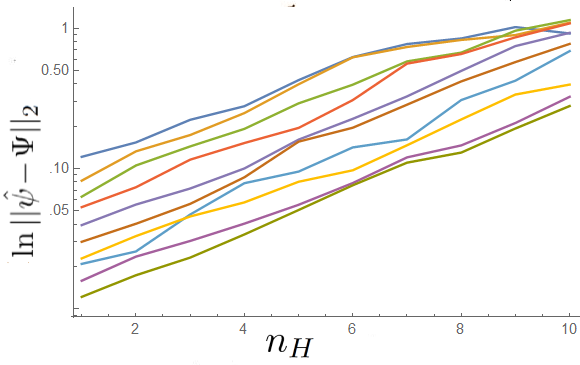}
  \includegraphics[width=0.45\textwidth]{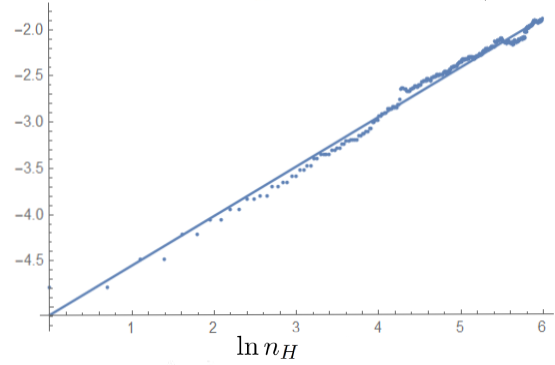}
  \caption{Error $||\hat\psi-\Psi||_2$ of the final state estimate compared to the exact quantum state $\Psi$ after a
    sequence of Hadamard gates $H^{n}$ on the first grabit  as
    function of the number $n_H$ of Hadmards for $\Nb=2$ grabits,
    starting in the initial state $\ket{00}$. $\hat\psi$ is renormalized to $||\hat\psi||_2=1$ for comparison with $\Psi$.  The second (ReIm-)grabit
    remains in $\ket{0}$.   
    {\em
      Left:}  Without refreshment gates the error increases exponentially
    with the number of gates. The different curves are averaged over
    $\Nr=50$ realizations and correspond to
    different values of $\Nl$ ranging from $\Nl=50$ to $\Nl=5000$ in
    10 linear steps (from top to bottom at $n_H=1$). 
   {\em Right:} Same as on the left but with an additional refreshment
   gate after each Hadamard gate for a sequence of up to 200 Hadamard
   gates on the first grabit in a log-log plot,
   i.e.~$\ln||\hat{\psi}-\Psi||_2$ as function of $\ln n_H$,
   where $n_H$ now counts the total number of gates including the
   refreshment gates. Emulation with $\Nl=10000$ and $\Nr=100$
   realizations.  The fit gives a power law
   $\ln||\hat{\psi}-\Psi||_2\simeq -5.08413 + 0.532838 \ln n$,
   i.e.~the error increase roughly as the square root of the number of
   gates.
   %Fig. left: From RefreshGlobal_20190528.nb, 2 bit - Error as
   %function of Nball and gate number for sequence of Hadamards 
   %Fig. right: From AccumulatedErrors_20190701.nb, 2 bit - Error as
           %function of NGate for sequence of Hadamards  followed by
           %RefreshGates; NH=200, Nball=10000, Nreal=100    
          \label{fig:expinc} }
\end{figure}

\subsection{Multi-grabit Refreshments}
Since entangled many-grabit states are based on byte4-value
correlations between different grabits it is clear that 
refreshing grabits independently is not sufficient.
In fact, if correlations exist it is counterproductive, as the example of the quantum circuit in Fig.\ref{fig:circ} shows.
The byte-4 probability
distributions $\ket{P_n}\rangle$ after step $n$ (with the first step
the first Hadamard gate) are
$\ket{P_1}\rangle=(1/2)(\kett{0}+\kett{2})\kett{0}$,
$\ket{P_2}\rangle=(1/2)(\kett{00}+\kett{22})$,
$\kett{P_3}=(1/4)(\kett{00}+\kett{20}+\kett{02}+\kett{32})$.  To these
correspond physical probability distributions
$\kett{\tilde{p}_1}=(1/2)\kett{00}+\kett{10}$,
$\kett{\tilde{p}_2}=(1/2)\kett{00}+\kett{11}$, 
$\kett{\tilde{p}_3}=(1/4)\kett{00}+\kett{10}+\kett{01}+\kett{11}$, and
grabit states
$\ket{\psi_1}=(1/2)(\ket{00}+\ket{10})$,
$\ket{\psi_2}=(1/2)(\ket{00}+\ket{11})$,
$\ket{\psi_3}=(1/4)(\ket{00}+\ket{01}+\ket{10}-\ket{11})$.  We see
that at each stage the Born-1 rule is satisfied,
i.e.~$\brakett{i|\tilde{p}_n}=|\braket{i|\psi_n}|$. Compared to the simple
sequence $H^2$, the CNOT between the Hadamard
gates prevents interference, so the final state is interference-free
and the Born-1 rule automatically fulfilled.  Hence, there is no need
for a refreshment gate. The reduced byte-4 probability distribution of
the first grabit alone, however, looks as if the CNOT was not there,
i.e.~gives $\kett{P^{(1)}_3}=(1/2)\kett{0}+(1/4)\kett{2}+(1/4)\kett{3}$. If
a refreshment of the first grabit is  performed, it ends up in
$\kett{P^{(1)}_4}=\kett{0}$.  But since the correlations are kept during the
refreshment process, this means that $\kett{20}\mapsto\kett{00}$ and
$\kett{32}\mapsto\kett{02}$. So the physical probability distribution
of the two grabits becomes $\kett{\tilde{p}_4}=(1/2)(\kett{00}+\kett{01})$, and
the two-grabit wavefunction
$\ket{\psi_4}=(1/2)(\ket{00}+\ket{01})$. So while the Born-1 rule is
satisfied, we end up with the wrong state and the wrong physical probability
distribution. \\
The above example clearly shows that when correlations are present,
single-grabit refreshments are inappropriate. However, the algorithm
presented for single-grabit refreshments can be easily generalized to
multi-grabit refreshments, by determining majority and minority
contributions to each byte logical value of all grabits combined, rather
than for each grabit separately.  Since all grabit states in the
example given are interference-free, the refreshment algorithm will
terminate immediately without doing anything in that case.  In
general, many different byte-4 values will contribute to the majority-
and minority-components of a multi-grabit byte logical value.
Indeed, eq.\eqref{eq:psiN} can be written as   
 \begin{equation}
  \label{eq:psiN2}
  \hat\psi_{\bm i}=\sum_{\bm \sigma, \pi_\sigma=0} R_{2\bm i+\bm\sigma}-\sum_{\bm \sigma, \pi_\sigma=1} R_{2\bm i+\bm \sigma}\,, 
\end{equation}
where $\pi_{\bm\sigma}=\sum_{i=1}^\Nb \sigma_i \,\text{mod\,}2$ is the parity of a given
gradient-value $\bm\sigma$. The majority contributions to the estimate
$\hat\psi_{\bm i}$ are the ones with a parity that determines the overall
sign of $\hat\psi_{\bm i}$. The fact that different majority and minority
components can contribute to the same $\hat\psi_{\bm i}$ is owed to the
possibility of shifting signs between different factors in a
multi-grabit state.  The generalization of $Rf_1$ in
Sec.~\ref{Sec.grabitrefresh} is the following: 
\begin{enumerate}
\item Estimate the multi-grabit state from \eqref{eq:psiest}, yielding $\hat\psi_{\bm i}$. 
\item For each logical value $\bm i$, determine $\text{sign}
  (\hat\psi_{\bm i})$, which in turn determines what are the majority and
  minority contributions to $\hat\psi_{\bm i}$.  In MIRE move all realizations
  corresponding to minority values to gradient-value $\bm\sigma=\bm 0$ if the majority
  parity is even, or $\bm\sigma=(0,\ldots,0,1)$ if the majority parity
  is odd, i.e.~one chooses a single canonical majority byte4 value for each $I$ by always locating the sign in the same grabit (a step we call ``sign concentration'').
This
  changes the probability distribution over logical 
  values, $\tilde{p}_i\mapsto {\tilde{p}^\text{M}}_i$ and $\psi_i\mapsto {\psi}_i^\text{M}$, modifying in general the amplitude ratios, but keeps the signs of the $\hat{\psi}_{\bm
    i}$ and enforces   $\hat{\tilde{p}}^\text{M}_i=|\hat{\psi_i}^\text{M}|$. 
  \item Move realizations between the remaining byte4 values until the original ratios between estimated probability
    amplitudes are restored as closely  as possible, i.e.~$\hat{\psi}_i^\text{M}\mapsto \hat\psi'_i\simeq d \,\hat\psi_i$ (ROAR).  
  \end{enumerate}

  The whole
algorithm can be implemented rather efficiently with an effort
$\propto \Nl$, 
by working on the realized support of the histogram
$\bm R$ only, and keeping a table of realizations  after MIRE
corresponding to 
majority and minority byte-4 values for each $\bm i$, as well as a
deviation table $\delta_{\bm i}$ of the new estimate $\hat\psi$ from the stored one before
MIRE (and both normalized to 1-norm=1). In ROAR one then
flips successively realizations of $\bm i$ with $\delta_{\bm i}>0$ to ones of $\bm j$ with $\delta_{\bm j}<0$, updating both the lists of 
realizations and the deviation table (which only needs a counting up
and down of the histogram  values, no re-evaluation based on all
realizations), till for the involved $\bm i$ or $\bm j$ the sign of
the deviation-table entry changes, or no more realizations to be
flipped are available for the given $\bm i$  or $\bm j$, in which case
one moves on to the next $\bm i$ or $\bm j$. Alternatively, one can implement ROAR with a Monte-Carlo algorithm based on random maps $I\mapsto J$ with a target function $||\hat{\psi}^M-\hat{\psi}||_2$ to be minimized.\\

\begin{figure}[h]
    \centering
    \includegraphics[width=0.5\textwidth]{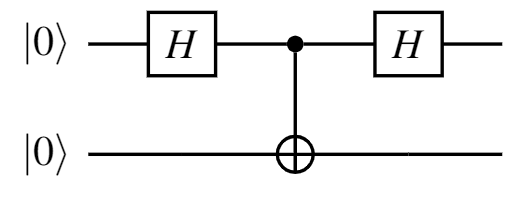}
    \caption{Exemplary quantum circuit to demonstrate the need for multi-grabit refreshments.}
    \label{fig:circ}
\end{figure}

An alternative refreshment algorithm is this one ($Rf_2$, ``removal of socket''): 
\begin{enumerate}
\item Remove all pairs of b4vs that differ only by their gradient
  value. This removes the ``socket'' of realizations that do not
  contribute to the $\hat\psi_i$ and reduces $\Nl$ to a new value $\Nl^{RS}$.  
\item Replicate all remaining b4vs an integer number of times, such that the new total number $\Nl'$ of realizations is mapped to $\Nl'=k \Nl^{RS}\ge \Nl$ with $k\in \mathbb{N}$ such that  $\Nl'$ is the smallest such integer $\ge \Nl$.\\
\end{enumerate}

While $Rf_2$ appears to be simpler than $Rf_1$, it
leads to a diffusive process of $\Nl'$, which requires the allocation
of sufficient memory such that values $\Nl'>\Nl$ can be
accommodated, or the reinitialization of the memory array containing
all b4vs after the refreshment. However, $\Nl'\le 2\Nl$ at all times,
due to the fact that the second step is executed only if $\Nl^{RS}$ drops
below the initial $\Nl$.   At first sight, the 2nd implementation
also seems to avoid estimation of $\psi$, but an equivalent
implementation is given by simply recreating a new set of 
$k \Nl^{RS}$
realizations based on $\hat\psi$, by replicating each set of existing
$\bm i$ $k$ times, where due to $\hat\psi_i\in\mathbb Q$ we can choose
the smallest $k$ such that $k\hat\psi_i\in\mathbb N$ $\forall i$, and
$\Nl'\ge \Nl$. A practical modification of $Rf_2$ (called $Rf_3$) consists in
allocating a fixed memory space for $2\Nl$ realizations. After
removing the socket and calculating the estimate $\hat{\psi}$, we
create $\Nl'$ realizations by assigning $ \left\lfloor 2\hat{\psi}_j
  \Nl \right\rfloor$ realizations to $j$ and then distribute all
remaining samples $N_{\text{ball,rest}} = 2\Nl - \Nl'$ onto those $j$
that have the highest fractional value $2\hat{\psi}_j \Nl
-\left\lfloor 2\hat{\psi}_j \Nl \right\rfloor$ remaining, thereby
minimizing the deviation $||\hat{\psi}-\hat{\psi}'||_1$  between the
state before refreshment $\hat{\psi}$ and the state after refreshment
$\hat{\psi}'$ while using all the available memory space to represent
the grabit state.

\subsection{Efficiency of Refreshments}
The following theorem shows that the grabit approach can perfectly
emulate a quantum algorithm via a stochastic classical algorithm in
the sense that for $\Nl\to\infty$ and refreshments implemented, the final
physical probability distribution is the same as the quantum
mechanical one, up to the different Born rule:
\begin{theorem}\label{thm2}
  Let $U$ be a unitary quantum algorithm of finite depth $N_g$, $\Phi\po{0}$, $\Phi\po{N_g}$ the realifications of the states $\Psi\po{0} $, and
$\Psi\po{N_g}$ as in Theorem \ref{thm1}, and
$S=S^{(N_g)}S^{(N_g-1)}\ldots S^{(1)}$, the corresponding stochastic
map on byte-4 values that emulates $U$. Let
${S}^{(R,N_g)}=R^{(N_g)}S^{(N_g)}R^{(N_g-1)}S^{(N_g-1)}\ldots
R^{(1)}S^{(1)}$ be a stochastic map on byte-4 values with each
stochastic matrix $S^{(j)}\in \mathcal I$ followed by an $\Nb$-grabit
refreshment gate, i.e.~$R^{(j)}=Rf_1$ for all $j$ with $S^{(j)}\in
\mathcal I$ (alternatively: $R^{(j)}=Rf_2$ for all $j$ with
$S^{(j)}\in \mathcal I$), and $R^{(j)}=\text{Id}$ else, where Id is
the identity operation on all byte-4 values.  Let $P^{(R,N_g)}_I$ be
the byte-4 probabilities produced by the stochastic map $S^{(R,N_g)}$
acting on an initial $P\po{0}$ with
$\Phi\po{0}=a\po{0}\partial_NP\po{0}$,
$P^{(R,N_g)}=S^{(R,N_g)}P\po{0}$. Then the physical probabilities in the final state
$      \tilde{p}_{\bm  i}^{(R,N_g)}=\sum_{\bm\sigma\in\{0,1\}^\Nb}
P^{(R,N_g)}_{2\bm i+\bm \sigma}$, satisfy
\begin{equation}
  \label{eq:tilpfin}
 \tilde{p}_i^{(R,N_g)}= \lim_{\Nl\to\infty}\hat{\tilde{p}}_i^{(R,N_g,\Nl)}  =|{\Phi}_i^{(N_g)}|/||\, {\Phi}^{(N_g)}\, ||_1\,, 
\end{equation}
where the estimator $\hat{\tilde{p}}_i^{(R,N_g,\Nl)}$ is given by \eqref{eq:ptildeest} with $R_{2\bm i+\bm \sigma}=R_{2\bm i +\bm \sigma}^{(R,N_g,\Nl)}$, the relative frequencies of the byte-4 values for $\Nl$ realizations. 
\end{theorem}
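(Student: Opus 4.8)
The plan is to reduce the statement to a \emph{deterministic} composition of maps on the true byte-4 probability vector, and then to combine Theorem~\ref{thm1} with the two refreshment conditions \eqref{eq:1.3} and \eqref{eq:1.4}. First I would dispose of the stochastic sampling: by the law of large numbers the relative frequencies converge, $R_I\to P_I$ as $\Nl\to\infty$, so that (a) applying a stochastic matrix $S\po{j}$ to a large population reproduces $S\po{j}P$ at the level of limiting frequencies, and (b) the estimator $\hat\psi$ feeding each refreshment gate converges to the true $\psi=\partial^{\Nb}P$. Since each refreshment ($Rf_1$ or $Rf_2$) is a fixed, piecewise-constant function of $\hat\psi$ — it uses only the signs of the $\hat\psi_{\bm i}$ and rational amplitude ratios, with a residual integer-rounding error of order $1/\Nl$ — its action converges to a well-defined deterministic map $R\po{j}$ on $P$. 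Composing these step-by-step convergences along the finite circuit of depth $\Ng$ then shows that $\lim_{\Nl\to\infty}\hat{\tilde p}_i^{(R,\Ng,\Nl)}$ equals the physical marginal of the deterministic $P^{(R,\Ng)}=R^{(\Ng)}S^{(\Ng)}\cdots R^{(1)}S^{(1)}P\po{0}$.

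On the deterministic side I would argue by induction on the circuit that the grabit state stays aligned with the realified quantum state. Theorem~\ref{thm1} already gives that every $S\po{j}$ sends $\psi\po{j-1}=a\po{j-1}\Phi\po{j-1}$ to $\partial^{\Nb}(S\po{j}P\po{j-1})=a\po{j}\Phi\po{j}$ with $a\po{j}\ne 0$. Each interposed refreshment obeys condition~\eqref{eq:1.3}, $\psi'=d\,\psi$ with $d\ne 0$, so it merely rescales the nonzero prefactor and leaves the direction $\Phi\po{j}$ intact; the interposed identity steps do nothing. Iterating over all $\Ng$ gates and refreshments yields $\psi^{(R,\Ng)}=\partial^{\Nb}P^{(R,\Ng)}=a^{(R,\Ng)}\Phi\po{\Ng}$ for some $a^{(R,\Ng)}\in\mathbb R\setminus\{0\}$, i.e.\ the refreshed propagation carries exactly the same quantum state as the bare one.

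It remains to extract the physical probabilities. The last refreshment, applied after the final gate in $\mathcal I$, enforces condition~\eqref{eq:1.4}: it makes the state interference-free, so that $\tilde p_{\bm i}=|\psi_{\bm i}|/\sum_{\bm j}|\psi_{\bm j}|$ (with the canonical interference-free initial representation of the Lemma serving as the induction base should no $\mathcal I$-gate precede a given point). Any gates following the last $\mathcal I$-gate lie outside $\mathcal I$ and are therefore sign-preserving byte-4 permutations (the $X$, $Z$ and $\mathrm{CNOT}$ type maps): such a permutation sends each occupied canonical byte-4 value $2\bm i+\bm\sigma$ to another byte-4 value of the same single-occupancy form, hence preserves the interference-free structure and permutes the $|\psi_{\bm i}|$ among themselves up to an overall nonzero rescaling that cancels in the Born-1 ratio. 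Thus $\tilde p_{\bm i}=|\psi_{\bm i}|/\sum_{\bm j}|\psi_{\bm j}|$ survives to step $\Ng$, and substituting $\psi^{(R,\Ng)}=a^{(R,\Ng)}\Phi\po{\Ng}$ cancels the prefactor and gives exactly $|\Phi_{\bm i}\po{\Ng}|/||\Phi\po{\Ng}||_1$, as claimed.

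The main obstacle is the first paragraph: justifying the interchange of the $\Nl\to\infty$ limit with the non-linear, data-dependent refreshment maps. These maps are discontinuous precisely where some $\hat\psi_{\bm i}$ crosses zero or where a majority/minority tie occurs, so one must verify that for generic $P\po{0}$ (and generic intermediate states — a condition that fails only on a measure-zero set of amplitudes) each refreshment is continuous at its limiting argument, so that the composed convergence does not break down. The remaining ingredients — Theorem~\ref{thm1}, the rescaling invariance of the direction under \eqref{eq:1.3}, and the permutation invariance of the Born-1 structure — are essentially bookkeeping.
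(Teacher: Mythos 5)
Your proposal is correct and takes essentially the same route as the paper's own proof: both rest on the consistency of the relative frequencies as $\Nl\to\infty$, on Theorem \ref{thm1} together with refreshment condition \eqref{eq:1.3} to keep $\psi$ proportional to $\Phi\po{\Ng}$, on condition \eqref{eq:1.4} to obtain the Born-1 form of the physical probabilities, and on the normalization $\|\tilde p\|_1=1$ to fix the prefactor as $1/\|\Phi\po{\Ng}\|_1$. The differences are organizational rather than substantive — the paper applies the refresh property once at the final state and takes a single limit, whereas you construct deterministic limit maps gate by gate — and you are in fact more explicit than the paper about two points it glosses over: the sign-preserving permutation gates that may follow the last refreshment, and the continuity of the nonlinear refreshment maps needed to justify interchanging the $\Nl\to\infty$ limit with the propagation.
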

In other words, for $\Nl\to\infty$, the final physical probability distribution on byte-logical values will be given by the Born-1 rule based on the true quantum-mechanical (realified) state $\Phi$, up to the different normalization. 
\begin{proof}
Let
\begin{equation}
  \label{eq:tilp}
\hat{\tilde{p}}_i^{(R,N_g,\Nl)}\equiv \sum_{\bm\sigma}R_{2\bm i +\bm\sigma}^{(R,N_g,\Nl)}
\end{equation}
be the estimate of the physical probabilities after $N_g$ gates based on the relative frequencies for the b4vs with $\Nl$ realizations and including the refresh operations.  By definition of the refresh operations, we have
\begin{equation}
  \label{eq:tilp2}
  \hat{\tilde{p}}_i^{(R,N_g,\Nl')}=|\hat{\psi}_i^{(R,N_g,\Nl)}|\,,
\end{equation}
where in general $\Nl'\ge \Nl$.  % In particular for the second type of refreshments, we can have $\Nl'> \Nl$. 
Now take the limit $\lim_{\Nl\to\infty}$ on both sides of eq.\eqref{eq:tilp2}. On the r.h.s.~we obtain
\begin{align}
  \lim_{\Nl\to\infty}|\hat{\psi}_{\bm i}^{(R,N_g,\Nl)}|&=|\lim_{\Nl\to\infty}\hat{\psi}_{\bm i}^{(R,N_g,\Nl)}|\label{p1}\\
&=|\lim_{\Nl\to\infty}\sum_{\bm \sigma}(-1)^{\sum_j\sigma_j}R_{2\bm i+ \bm \sigma}^{(R,N_g,\Nl)}|\label{p2}\\
 &=|\sum_{\bm \sigma}(-1)^{\sum_j\sigma_j}P_{2\bm i+ \bm \sigma}^{(R,N_g)}|\label{p3}\\
  &=|\psi_{\bm i}^{(R,N_g)}|\label{p4}\\
                                                       &=c^{(N_g)}|\psi_{\bm i}\po{N_g}|\label{p5}\\
 &=d^{(N_g)}|\phi_{\bm i}\po{N_g}|\,.\label{p6}
\end{align}
The crucial step is the one from \eqref{p2} to \eqref{p3}, which
follows from the consistency of the maximum likelihood estimator $R_I$
of the multinomial distribution $P_I$, i.e.~$\lim_{\Nl\to\infty}
R_I=P_I$ \cite{Hoeffding_asymptotically_1965}, and the finite sum 
over $\bm\sigma$ that commutes with $\lim_{\Nl\to\infty}$.  Hence, the
limit in the r.h.s.~of \eqref{p1} exists and is equal to the l.h.s~of
\eqref{p1}. Eq.\eqref{p3} to \eqref{p4} uses the definition \eqref{eq:psiN}. The step from \eqref{p4} to \eqref{p5} is based on the fact
that for $\Nl\to\infty$, $\hat\psi_i\to\psi_i$ $\forall i$, such that
the refreshment based on reconstructing the state from $\hat\psi$
leaves the amplitude ratios of state $\psi$ unaltered. The last step
is the statement of Theorem \ref{thm1}. 
On the l.h.s.~in \eqref{eq:tilp2} we obtain
\begin{align}
    \lim_{\Nl\to\infty}|\hat{\tilde{p}}_i^{(R,N_g,\Nl')}|& =
                                                         \lim_{\Nl'\to\infty}|\hat{\tilde{p}}_i^{(R,N_g,\Nl')}|\label{q1}\\
                                                         & =|\lim_{\Nl'\to\infty} \hat{\tilde{p}}_i^{(R,N_g,\Nl')}|\label{q2}\\
  &= |\lim_{\Nl'\to\infty}\sum_{\bm \sigma}R_{2\bm i+ \bm
    \sigma}^{(R,N_g,\Nl')}|\label{q3}\\
  &= |\lim_{\Nl'\to\infty}\sum_{\bm \sigma}P_{2\bm i+ \bm
    \sigma}^{(R,N_g,\Nl')}|\label{q4}\\
 & =\tilde{p}_{\bm i}^{(R,N_g)}\,.\label{q5}
\end{align}
Comparison of
eqs.\eqref{q5},\eqref{q1}, and \eqref{p6} leads to
eqn.\eqref{eq:tilpfin} when taking into account that from the normalization $||\hat{\tilde{p}}^{(R,N_g,\Nl')}||_1=1$ the constant $d^{(N_g)}$ is
found as $1/||\, \Phi^{(N_g)}\,||_1$.  This completes the proof. 
\end{proof}

While Theorem \ref{thm2} shows that the grabit approach combined with
refreshements leads to a
stochastic emulation of any finite-depth and finite-size quantum
algorithm in the sense that at 
the end of the stochastic process the physical probabilities for any
outcome are given by
the absolute values of the amplitudes of the true quantum mechanical
state (i.e.~the Born-1 rule, implying that we find with high probability
an outcome that also has a high probability for the corresponding
quantum algorithm), the limit
$\Nl\to\infty$ is of course impractical.  It is clear that for completely delocalized
$\psi$, $\Nl$ must be at least of the order $2^\Nb$ to, on average,
even have pairs that annihilate in the refreshment algorithm, or, equivalently, to estimate
$\psi$.    
The refreshments hence cannot, in the limit of large $\Nb$, mend the
problem of the exponentially 
decreasing prefactor of $\psi$ if the contributions to at least one
blv are spread over an exponential number of b4vs, without
exponentially increasing $\Nl$.  
Unfortunately, it appears that 
the majority of useful quantum algorithms use an exponentially large
amount of interference \cite{Braun06,Arnaud07,BraunG08}: they start
with creating a superposition of all computational basis states and
only close the such opened interferometer at the very end, e.g.~with
an inverse QFT, where amplitudes on all other blvs will
contribute with different phases to any $\psi_i$. \\ 

We remark that $N_g$ does not explicitly enter in the proof of theorem \ref{thm2}.  In fact, 
one could do the refreshments at the very end of the algorithm, based
simply on Theorem \ref{thm1}, dealing with the prefactor that
decreases exponentially with $N_g$ by exponentially increasing
$\Nl$. This has the advantage of allowing massive parallelization of
the emulation until the very end, with one thread per realization,
minimal memory of just 2$\Nb$ bits, and basic boolean operations on
those (plus a random number generator 
that generates a random bit for each gate in $\mathcal I$).
At the example of $H^n$ on a single qubit, we have seen,
however, that in practice it is worthwhile to refresh after each gate
$\in \mathcal I$, even for substantial oversampling, in order to
prevent even faster accumulation of errors in $\psi$. For
parallelization this requires re-synchronization of all threads for
each refreshment gate.  
Rather than calculating analytical bounds on $\Nl$ required for a
certain precision of $\tilde p$, we study in the next section
the required scaling of $\Nl$ numerically for the
paradigmatic Quantum Fourier transform and the quantum approximate optimization algorithm (QAOA). 

\section{Quantum algorithms}\label{Sec.qalgo}
We now show how the above emulation of a many-body quantum state and
elementary quantum gates leads naturally to the stochastic emulation
of quantum algorithms on a classical computer.  

\subsection{Quantum Fourier transform}
The quantum Fourier transform on $\Nb$ qubits realizes a unitary transformation with matrix elements 
$U_{ij}$ given in the computational basis as
\begin{equation}
  \label{eq:QFT}
  U_{jk}^{\text{QFT}}=\frac{1}{2^{\Nb/2}}e^{2\pi j\,k/2^\Nb}\,.
\end{equation}
It is a fundamental subroutine for some of the most important quantum
algorithms, such as order finding, Shor's factoring algorithm, and the
hidden-subgroup problem.  It can be efficiently
realized with a
series of Hadamard gates followed by a sequence of phase-gates $R_K$
with
\begin{equation}
  \label{eq:Rk}
  R_k=
  \begin{pmatrix}
    1&0\\
    0& e^{2\pi i/2^k}
  \end{pmatrix}
\end{equation}
on each qubit, see \cite{Nielsen00} Sec.5.1, followed by a sequence of
swap gates that exchange qubit labels 1 and $\Nb$, 2 and $\Nb-1$,
$\ldots$, $\Nb/2$ and $\Nb/2+1$ in the case $\Nb$ is even (or
$(\Nb-1)/2$ and $(\Nb-1)/2+1$ in the case $\Nb$ is odd).  \\

The most basic
use of the QFT is to perform a discrete Fourier transform of
amplitudes of computational basis states, i.e.~a state of the form
$\ket{\psi}=\sum_{j=0}^{2^\Nb-1}x_j \ket{j}$ is mapped to
$\ket{\tilde{\psi}}=\sum_{j,k=0}^{2^\Nb-1}x_jU_{jk} \ket{k}$.  As
such, the QFT is not very useful yet, as the state collapses to a
random computational state when read out, rather than the register
containing a desired
Fourier coefficient $y_k=x_jU_{jk}$, but it becomes useful if the state is
periodic. In that case, the Fourier coefficients will be highly peaked
on some values $k$, and the state will collapse to one of those with high
probability when measured, which allows one to recover the periodicity of the state. 

Since $R_k$ is a complex phase gate, we need to spend a ReIm-grabit to
emulate the QFT, in which case the $R_k$ gate becomes a controlled-rotation
gate of the form \eqref{eq:Uphi} with $\phi=2\pi/2^k$.  Figure
\ref{fig:QFT} shows that the QFT can indeed be emulated stochastically
using grabits.  Panel a.) shows the output state for a periodic input
state of the form $\ket{\psi}=(\ket{0}+\ket{p}+\ket{2p}+\ldots )$ with
$p=3$ for $\Nb=5$ (including the ReIm grabit), $\Nl=10^5$, compared to the exact
state obtained from propagating the input state with the unitary
\eqref{eq:QFT}. One expects a peak at ``wavenumber'' $2\times 16/3$,
(where the 2 is from mapping to a real wavefunction by using the ReIm
grabit) and integer multiples, but since this is not an integer, the
peak gets distributed over 
neighboring wavenumbers.  Panel b.)  is for $p=4$ for $\Nb=6,
\Nl=10^5$.  Here one clearly sees the expected peaks at
wavenumber $2\times 32/4$ and integer multiples.  
\begin{figure}[t]
  \includegraphics[width=0.49\textwidth]{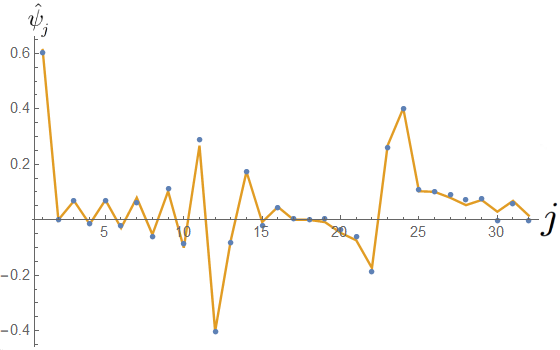}
  \includegraphics[width=0.49\textwidth]{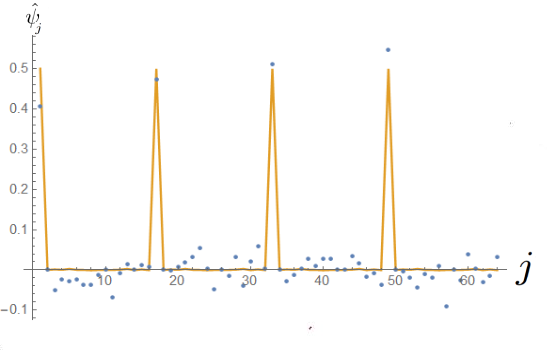}

  \caption{Comparison of the final state estimate $\hat{\psi}$ 
    of QFT from a stochatic
    emulation of the algorithm (blue dots) with exact propagation $\Phi$
    (orange lines; values only defined on integer support;
    computational states counted here from 1 instead from 0). $\hat{\psi}$ is renormalized to $||\hat{\psi}||_2 =1$ for the comparison. {\em
      Left:} $\Nb=5$,   
    $\Nl=10^5$, initial periodic state with period 3 (over the first 4
    grabits).
   {\em Right:} $\Nb=6$ including the ReIm-grabit,
    $\Nl=10^5$, initial periodic state with period 4 (over the first
    5 grabits).
    \label{fig:QFT} }
  % From QFT_20190404.nb, QFT tests
\end{figure}
An important question is how the $\Nl$ needed to find the correct
period of a periodic function via the QFT scales with $\nb$. This is
analysed in Fig.~\ref{fig:QFTscaling}, where the inverse QFT is used
to calculate the period of a state $\ket{\tilde{k}} = \frac{1}{\sqrt{2^{\nb}}} \bigotimes_{j=1}^{\nb} (\ket{0} + \exp(2\pi i k/2^j) \ket{1})$ prepared in the
Fourier basis, 
so that $\text{QFT}^{-1} \ket{\tilde{k}} = \ket{k}$, where 
$k>2^{\nb-1}$ was chosen \footnote{For simplicity, the SWAP operations were
  skipped both in the forward and backward QFT.}. 
If $\text{argmax}_l\,\hat{\tilde{p}}_l=k$, 
after running the algorithm that run is considered  a success.
The figure shows that for a fixed success ratio (i.e.~ratio of number of successful runs to the total number of runs), $\Nl$ scales
exponentially with $\nb$. A fit to $\Nl=a \exp (b\, \nb)$ with $a,b$ as fit parameters generates the scaling law $\Nl = 3.46  \cdot \exp(0.7\nb)
\simeq 3.46\cdot 2^\nb$. 
\begin{figure}[t]
  \includegraphics[width=0.6\textwidth]{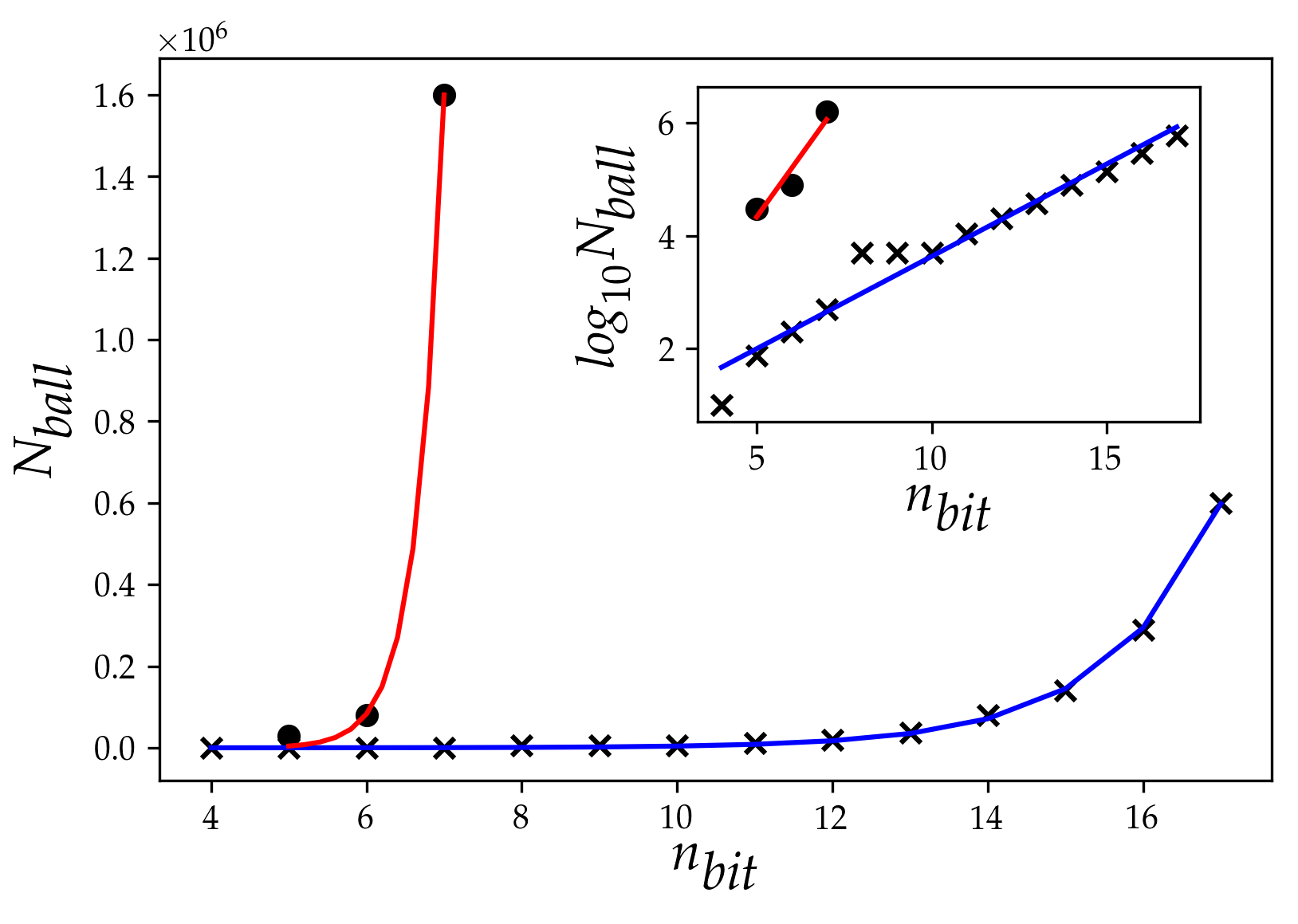}
  \caption{Minimum $\Nl$ needed in order to successfully run an
    inverse QFT in at least $10\%$ of trials as function of the size
    of the circuit in terms of $\nb$. 
    The number of gates in
    each circuit scales like $\mathcal{O}(\nb^2)$. The blue line with crosses shows
    a  fit of the form $\Nl=a \exp (b\, \nb)$ with refreshment $Rf_3$
    after each gate while the red one with points
    shows a similar fit without refreshment. The fits result in an
    exponential scaling of $\Nl \propto \exp( 0.7 \nb)$ for the case
    with refreshment and $\Nl \propto \exp(2.9 \nb)$ without
    refreshment, 
    i.e.~refreshments improve the scaling of the minimum $\Nl$ significantly, but
    still require exponentially large $\Nl$. The inset shows the same
    data in a     lin-log plot.
    \label{fig:QFTscaling} }
\end{figure}

The culprit for this scaling is the estimation of the quantum state
needed in the refreshment. So while refreshments help to fight
the exponential decrease of the amplitude of the state with the number
of gates and quickly become necessary when the number of
interference-creating gates increases with the depth of the algorithm,
they still require in general an exponentially large number of
realizations. For algorithms that are
$p$-blocked  one might envisage refreshments only for
the correlated blocks by keeping track of the blocks of correlated
qubits as proposed in \cite{Jozsa03}.  In that case the grabit
emulation should be efficient, but not more so than the method
in \cite{Jozsa03}.

\subsection{Quantum Approximate Optimization Algorithm}

The Quantum Approximate Optimization Algorithm \cite{farhi_quantum_2014} (QAOA) offers approximate solutions
 for combinatorial optimization problems. The cost function of the optimization problem can
 be translated to the 
 expectation value of a cost Hamiltonian $H_C$. This Hamiltonian combined with a mixer Hamiltonian $H_B$ creates the variational ansatz of the QAOA,
  \begin{equation*}
     \ket{\psi_p(\bm{\gamma},\bm{\beta})} = e^{-i\beta_pH_B} e^{-i\gamma_p H_C}... e^{-i\beta_1 H_B} e^{-i\gamma_1 H_C} \ket{+}^{\otimes \nb},
 \end{equation*}
 where $p$ is an input parameter that sets the 
 depth of the ansatz. A classical optimization over
 $(\bm{\gamma},\bm{\beta})$ aims at minimizing the expectation value
 $\bra{\psi_p(\bm{\gamma},\bm{\beta})}H_C\ket{\psi_p(\bm{\gamma},\bm{\beta})}$,
 the resulting ``cost function'' of the optimization problem.   
 For $p=1$ and some regular graphs of bounded degree, the cost
 function can be calculated analytically
 \cite{farhi_quantum_2014,ryan-anderson_quantum_2018}. In all other
 cases, one needs to estimate
 $\bra{\psi_p(\bm{\gamma},\bm{\beta})}H_C\ket{\psi_p(\bm{\gamma},\bm{\beta})}$
 by sufficiently large sampling,  i.e.~a large number of ``shots'' of
 the quantum algorithm is required for many settings in order to
 perform the classical optimization step. Additional samplings are
 required in the end to estimate the
 ground state energy or to find the ground state with a given fixed
 success probability.  It is then interesting to compare the total
 number of shots needed  with $\Nl$ in the grabit approach, 
 as $\Nl$ has a similar meaning as the number of samples drawn. 
 This comparison is shown in Fig.~\ref{fig:QAOA} for QAOA designed to
 solve a portfolio optimization problem (see Appendix
 \ref{app:B}),  
 where we plot 
the probability $P_{gs}$ to find the system in the true ground state
of $H_C$ after the optimization as function of the number of shots
used, and the same quantity for the grabit approach as function of
$\Nl$.  The number of shots used was calculated by simulating the
quantum algorithm in form of a quantum circuit using the density
matrix simulation software QASM \cite{qiskit}, see Fig. \ref{app:circ}
for an example circuit with $p=1$. Each data point is the mean of 100
optimization runs. The spread in the QASM case is due to different
chosen values for the number of shots per step and the number of steps
in total. We further showcase the dependency 
of the estimation error of the final state compared to the exact quantum state
when simulating an exemplary QAOA circuit for different values of
$\Nl$ as function of $n_\text{gate}$.  The latter figure shows that
the estimate of the grabit state indeed approaches the true state for
high enough $\Nl$. However, we also see in the left figure that for
$\Nl$ equal to the number of shots in the quantum algorithm, the
latter still outperforms the grabit approach slightly. For larger $p$
the circuit depth increases and even larger values $\Nl$ are needed
for high accuracy, whereas in the quantum case the number of shots
increases only due to additional optimization steps. 
It also becomes clear that both quantum circuit and grabit approach
are by far outperformed by the simplest classical approach of going
through all possible combinations, of which there are just $2^5=32$ in
the present problem.

\begin{figure}[t]
  \includegraphics[width=0.49\textwidth,trim={0 0cm 0 0},clip, height=5cm]{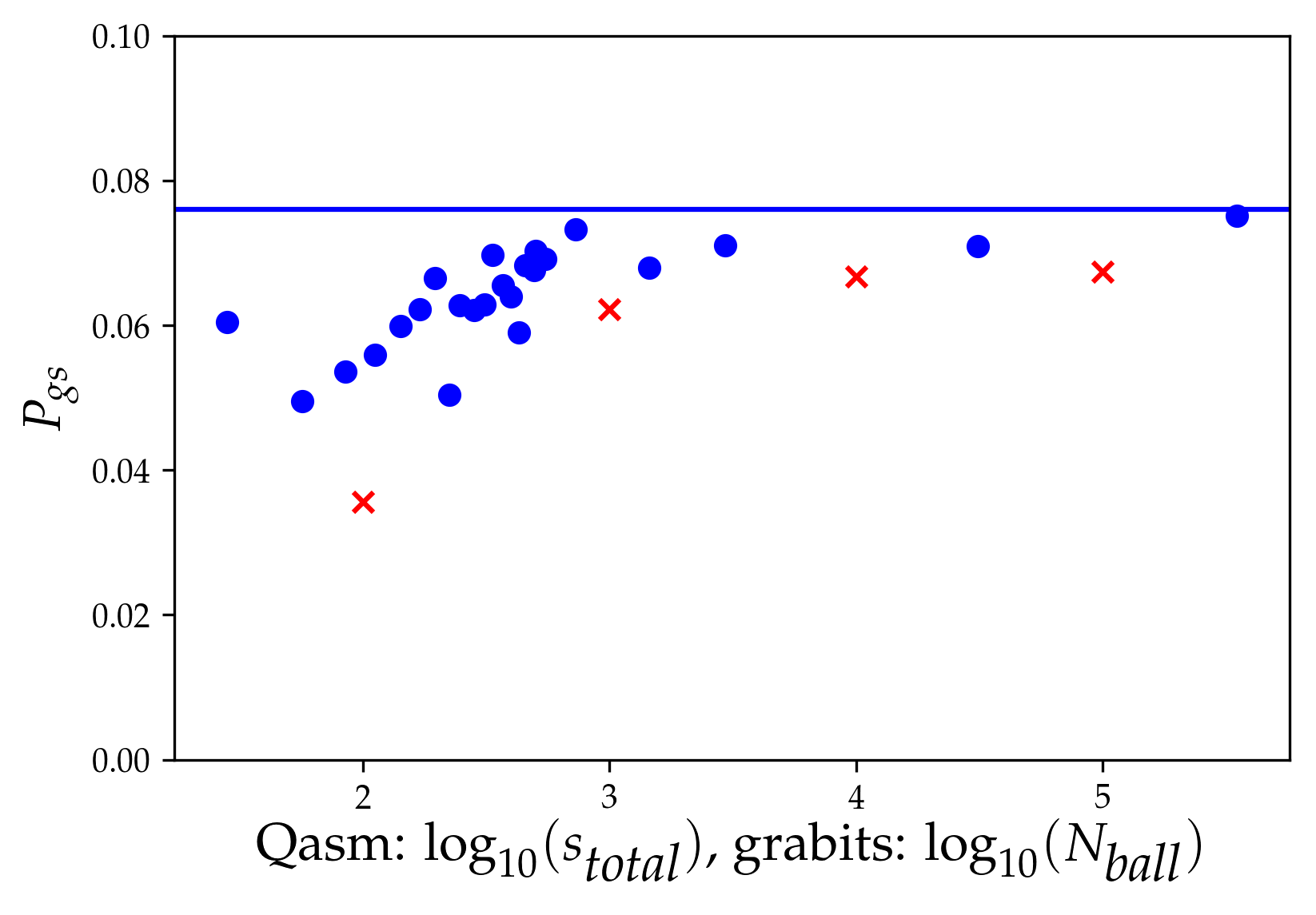}
  \includegraphics[width=0.49\textwidth,trim={0 0 0 0},clip,height=5cm]{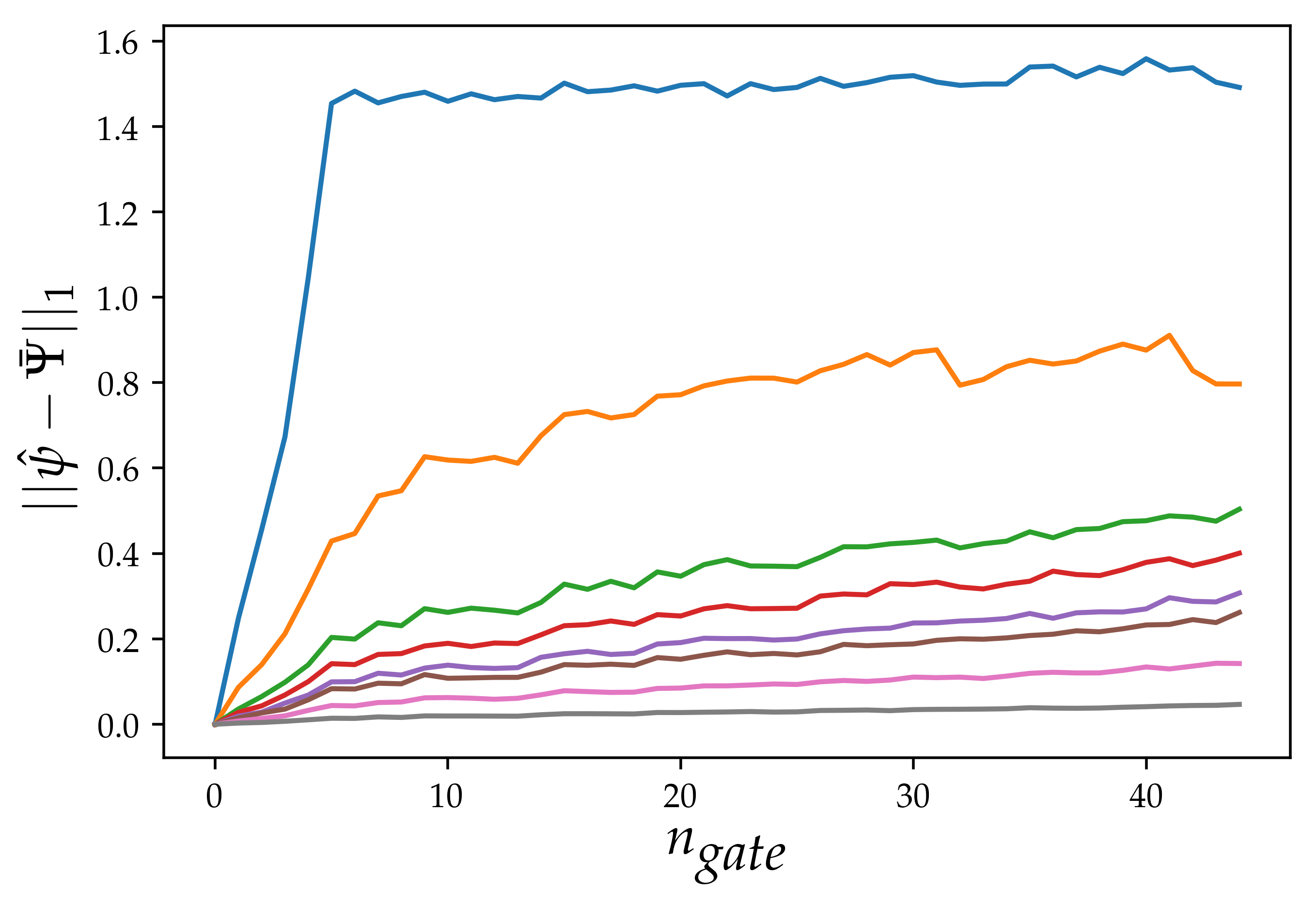}
  \caption{
    Comparison of the performance of the grabit approach and direct
    implementation of QAOA (as simulated with QASM) for a portfolio
    optimization problem  on $\nb=5$ qubits with $p=1$ (see Appendix
    \ref{app:B}). {\em Left:} The ground state probability
    $P_{gs}$ after running the QAOA circuit as a function of the total
    number of shots for the  QASM 
    simulator,  and  of the value of $\Nl$ for the grabit approach.
    The points show the results for the QASM simulator while the crosses represent the grabit approach.  Each shot/sample resembles one evaluation of the cost function. The horizontal blue line shows the result of a statevector simulation. 
    {\em Right:} $||\hat{\psi} - \bar{\Psi }||_1$, with $\bar{\Psi}= (|\Psi_0|^2,...,|\Psi_{2^{\nb}-1}|^2)$,
    as a function of the number
    of executed gates $n_{\text{gate}}$ in a single QAOA circuit for $p=1$, see Fig. \ref{app:circ}, for
    different values of $\Nl=10,100,500,1000,2000,3000,10000,100000$ (top to bottom). $\Nl \gg 2^{\nb}=32$ is required to get a result with a low deviation from the true state.
    \label{fig:QAOA} }
\end{figure}

\section{Conclusions and outlook}\label{Sec.Realism}
We have shown that any multi-qubit state $\Psi$ of $\nb$ qubits can be
represented as a discretized partial derivative $\psi$ of order
$\Nb=\nb+1$ of a classical probability distribution over $\Nb$ ``byte-4 values'', each
represented by 2 classical stochastic bits.  The state $\psi$ is fully
capable of interference in the exponentially large Hilbert space
spanned by all $\psi$ and can be propagated with a stochastic
algorithm that fully reproduces the quantum state evolution of
any pure-state quantum algorithm  up to a prefactor linked to
different normalization.  Complex $\Psi$ can be mapped to real ones by
adding a single additional qubit (resp.~grabit).
We have identified
the stochastic maps for all gates in a universal gate set, and hence
found an automated translation of any quantum algorithm represented by
a quantum circuit to a classical stochastic algorithm that becomes
exact in the limit of $\Nl\to\infty$ (see Theorem\ref{thm1}). 
Based on this, we emulated
several key quantum algorithms.  The norm of the estimator $\hat\psi$
of the grabit state typically decays
exponentially with the depth of the quantum algorithm.   This
can be remedied to some extent with a ``refreshment procedure'' that
exploits a gauge degree of freedom in the representation of $\psi$
as a grabit state, but the number of realizations needed for given
success probability of the studied grabit-algorithms still scales exponentially with
$\Nb$.  This is due to the fact that in the refreshment procedure the
quantum state is estimated, and it can only be estimated based on the
drawn realizations.  This prevents automated  turning 
efficient quantum algorithms into {\em efficient} stochastic
emulations with this method. 

From a foundational perspective, the grabit approach demonstrates a
fundamental difference between how probabilities are propagated in
quantum mechanics and in classical 
mechanics: in the latter, we can only do so by acting on drawn
realizations, whereas in quantum mechanics it appears that we can propagate
states that fully determine the measurement statistics in the
limit of very many measurements when manipulating a single quantum
system --- i.e.~apparently without the material support of all the actual
instances that would be needed classically to precisely implement the
stochastic gates (see
also \cite{mohan_quantum_2021} for an assessment of the ``immaterial
reality'' of a quantum mechanical state). Fundamentally it is unclear
how Nature stores a state then, and where the apparent true randomness
comes from.  A natural speculation would be that the information is
stored in other dimensions, and that the particles that we see in our
world are random samples from many different realizations from that
hidden part of the Universe.  In fact, as mentioned, the present grabit approach can
be generalized further: we can add coordinates $y_i$ to each particle,
and define a wavefunction
$\psi(x_1,\ldots,x_N)=\partial_{y_1}\ldots\partial_{y_N}P(x_1,y_1,\ldots,x_N,y_N)|_{\bm
  y=0}$. After the discretization of the derivative we saw indeed two
different types of coordinates arise naturally, but play a very
different role: byte-logical values $i$, and gradient values
$\sigma$. We can think of the latter as  pointing in the direction of
a ``hidden world'', whereas the former correspond to the usual
coordinates that we see in ``our'' world.  True randomness would then
not be incompatible with realism, a view also advanced in
\cite{gisin_is_2010,wetterich_quantum_2021}, and such a view of reality resembles the
Many Interacting World approach to quantum mechanics (\cite{PhysRevX.4.041013,herrmann_ground_2018} and references therein). However, the number of 
worlds would have to be infinite to exactly represent a quantum
state. \\ 
Theories with extra dimensions  have, of course, a long history in
physics, from Kaluza-Klein till string theories
\cite{PhysRevLett.95.161601}.  Typically the extra dimensions are
treated on an equal footing as the ``normal'' ones, just curled up to very small
length-scales.  This is also true for proofs that they cannot exist,
e.g.~because they would lead to the instability of the hydrogen atom
\cite{lesch_dimensionen_2007}.  In the grabit context the extra
dimensions play a fundamentally different role.  In particular, a
particle cannot disappear into these other dimensions.  Rather we can
think of the extra coordinate of something like an additional
spin-degree of freedom that may have its own dynamics and manifests
itself only if we go to microscopic length-scales, where quantum
inteferences start to play a role.  \\ 

Thus, while the grabit approach has, so far, not lead to new efficient
classical algorithms, it 
might give new insight into the nature of
quantum states, which  almost 100 years after the discovery of quantum mechanics
is still subject of debate and research (see
e.g.~\cite{pusey_reality_2012}).  In particular, our new approach has proven 
that at least one other physical object besides the familiar
many-particle quantum states of quantum mechanics exists that allows
for quantum interference in an exponentially large tensor-product
Hilbert space, combined with the possibility of local
operations, and a probability interpretation attached.  Hopefully this
will lead to the discovery of other such  
objects and fresh insights into possible ``roads to reality'' in  quantum mechanics and
quantum mechanical states in particular.  \\

While finishing the manuscript, a preprint with some ideas similar to
those presented here appeared \cite{wetterich2021quantum}.  %%TODO CITE IS EMPTY

\acknowledgements{DB thanks Michael Hall for
  introducing him to the Interacting Many 
  World (IMW) picture, and Uzy
  Smilansky for encouragements to think about classical entanglement
  more than 20 years ago. He thanks Roderich Tumulka for interesting
  discussions and pointing out additional references to the (IMW)
  picture of Quantum Mechanics. We thank Gerhard Hellstern and Thomas
  Wellens for sharing unpublished work. We acknowledge use of
  the IBM Q for this work. The views expressed here are those of the authors and do
not reflect the official policy or position of 
IBM or the IBM Q team.}\\

{\bf Author contributions}\\
DB conceived the idea, developed the
theory, implemented basic simulations of the quantum algorithms with
the exception of QAOA, and
wrote most of the manuscript.  RM implemented the QFT and QAOA with
QISKIT and as grabit algorithms and performed the scaling analysis of
the required $\Nl$.

\section{Appendix A: Analytical results for the error after $H^2Rf_1$} \label{app:A}
In the following we study the scaling of the error as function of
$\Nl$ 
including refreshments after each $H^2$ sequence on a single qubit. 
Starting with initial state $\ket{0}$ of a single qubit, $H^{2n}$ leaves
the state real in the computational basis, and it is enough to analyse
the propagation of a single grabit. $P_I$ denote here the probabilities for b4vs after the first $H^2$, i.e.~$\bm P=(1/2,0,1/4,1/4)^t$. At that stage, the unnormalized grabit-state estimate is $\hat\psi=(R_0,R_2-R_3)^t$. In the limit of large $N$, a refreshment gate preserves the amplitude ratios of $\hat\psi$ but scales up the prefactor by making the state interference-free.  Hence, after $H^2Rf_1$, the physical probabilities are estimated  as 
\begin{equation}
  \label{eq:pt0}
  \hat{\tilde{p}}_0^{(H^2Rf_1)}
  =\frac{R_0}{R_0+|R_3-R_2|}\,.
\end{equation}
To be specific we focus on $Rf_1$, but any other refreshment routine that fulfills \eqref{eq:1.3} and \eqref{eq:1.4} is covered. We now show that the estimate \eqref{eq:pt0} is efficient, i.e.~$\langle \hat{ \tilde{p}}_0^{(H^2Rf_1)} \rangle \stackrel{\Nl\to\infty}{\longrightarrow} \tilde{p}_0^{(H^2Rf_1)}$, where the expectation value is over the statistical ensemble created by the stochastic propagation, and estimate the error as function of $\Nl$.

Since we expect $|R_3-R_2|\ll R_0$ for $\Nl\to\infty$, the expectation
value  can be expanded as $1-\langle 
\frac{|R_2-R_3|}{R_0}\rangle+{\cal O}(\frac{|R_2-R_3|}{R_0}^2)$, and
the second order will be neglected. The probability for realizing a
certain histogram $\bm R$ is given by the multinomial distribution,
\begin{equation}
  \label{eq:ph}
  p_{\bm R}=N! \prod_{I=0}^3\frac{P_I^{N R_I}}{(N R_I)!}\,,
\end{equation}
where $N\equiv \Nl$ for short, and $R_3=1-R_0-R_1-R_2$. Now use the
Stirling approximation for large $N$, 
$N!\simeq \exp(N \ln N-N)\sqrt{2\pi N}$, consider $R_I$ to vary continuously, and change summation to 
integrals, $\sum_{n_i=0}^N\rightarrow N\int_0^1 dR_i$. Thus, for large $N$, the probability of realizing a given 
histogram $\bm R\equiv (R_0,R_1,R_2,1-R_0-R_1-R_2)$ reads
\begin{equation}
  \label{eq:ph2}
  p_{\bm R}\simeq \frac{1}{2\pi
    N}\frac{1}{\sqrt{R_0R_2(1-R_0-R_2)}}\exp\left[
    N\left\{
\sum_{I=0,2} R_I\ln
\left(\frac{P_I}{R_I}\right) 
+(1-R_0-R_2)\ln\left(\frac{1-P_0-P_2}{1-R_0-R_2}\right) 
    \right\}
  \right]\,.
\end{equation}
The expectation value $\langle Q(\bm R)\rangle\equiv \langle\frac{|R_3-R_2|}{R_0}\rangle$
then becomes
\begin{equation}
  \label{eq:Qh}
 \langle Q(\bm R)\rangle=\frac{N}{2\pi}\int_0^1
  dR_0\left[\int_0^{(1-R_0)/2}dR_2-\int_{(1-R_0)/2}^{1-R_0}dR_2\right]\frac{1-R_0-2R_2}{R_0}\frac{e^{N
  R(\bm R)}}{\sqrt{R_0R_2(1-R_0-R_2)}}\,,
\end{equation}
where the ``action'' $NR(\bm R)$ is given by the argument of $\exp$ in
\eqref{eq:ph2}. We solve the second integral by saddle point
approximation (SPA) in
the limit of large $N$.  This requires finding the maximum of $R(\bm
x)$, i.e.~solving $\partial_{R_2}R(\bm R)=0$, with the
solution $R_2^s=(P_2/(1-P_0))(1-R_0)$. After the gates $H^2$,
$P_2/(1-P_0)=1/2$.  Therefore, the saddle point ${R}_2^s=(1-R_0)/2$
is exactly on the integration boundary for $R_2$ in both integrals. In
addition the pre-exponential factor (considered here as function of
$R_2$ only for the purpose of the integral over $R_2$) $\Phi(R_2)\equiv
(1-R_0-2R_2)/\sqrt{R_0R_2(1-R_0-R_2)}$ vanishes at
$R_2={R}_2^s$. This is an unusual situation for the SPA.  The 1st
problem is still treated in text books such as \cite{Copson65}, see
p.37. We generalize this method to treating also $\Phi({R}_2^s)=0$ by
expanding it. Looking at the 1D case with $x\equiv R_2$ as integration
variable, consider the integral
\begin{equation}
  \label{eq:I}
  I=\int_\alpha^\beta dx \Phi(x)e^{N h(x)}\,.
\end{equation}
Define $t$ by $h(x)=h(\beta)-t^2$. So at $x=\beta$, $t=0$, and more generally,
$x=h^{-1}(h(\beta)-t^2)$, and $dx=-2t/h'(x)\,dt$.  By the mean value
theorem, there exist values $\xi,\xi_1$ such that
$dx=\sqrt{-2h''(\xi)}/(-h''(\xi_1))dt$ \cite{Copson65}.  For $N\to\infty$, these move
arbitrarily close to $\beta$, such that in this limit
$dx=-dt/\sqrt{-h''(\beta)/2}$, where a maximum of $h(x)$ at $x=\beta$
was assumed, i.e.~$h(x)$ increases monotonically sufficiently close to
$x=\beta$ for $x<\beta$. Together with $h'(\beta)=0$, this allows one
to write $t^2=h(\beta)-h(x)\simeq -h''(\beta)(\beta-x)^2/2$ and
$\Phi(x)\simeq \Phi'(\beta)(x-\beta)$. Inserting this into
the integral, we find
\begin{equation}
  \label{eq:I2}
  I\simeq \int_0^\tau dt\frac{\Phi'(\beta)(x-\beta)}{\sqrt{\frac{|h''(\beta)|}{2}}}e^{N(h(\beta)-t^2)}\,,
\end{equation}
where the value of $\tau>0$ is irrelevant, and
$x-\beta=\sqrt{-2/h''(\beta)}t$.  With this we 
finally obtain from the lower boundary at $t=0$ 
\begin{equation}
  \label{eq:I3}
  I\simeq \frac{\Phi'(\beta)e^{Nh(\beta)}}{Nh''(\beta)}\,.
\end{equation}
Applied to \eqref{eq:Qh}, and considering that both integrals over
$R_2$ give the same value when the minus sign of the second is
included, we obtain
\begin{equation}
  \label{eq:Qh2}
 \langle Q(\bm R)\rangle=\frac{1}{\pi}\int_0^1dR_0\frac{e^{N\tilde{R}(R_0)}}{R_0^{3/2}}\,,\,\,\,\tilde{R}(R_0)=R(R_0,{R}_2^s(R_0))\,,
\end{equation}
which leads to
\begin{equation}
  \label{eq:tilp0}
  \langle\hat{\tilde{p}}_0\rangle\simeq 1-\frac{2}{\sqrt{\pi N}}+{\cal O}(\langle Q^2(\bm
  R)\rangle,\frac{1}{N})\,. 
\end{equation}
The expectation of the estimator of the logical probability vector after the sequence $H^2Rf_1$ hence 
deviates from  the correct value $(1,0)$ as
\begin{equation}
  \label{eq:diffp}
  ||
  \begin{pmatrix}
    \langle\hat{\tilde{p}}_0\rangle\\
    \langle\hat{\tilde{p}}_1\rangle\\
  \end{pmatrix}
  -
\begin{pmatrix}
    1\\
    0\\
  \end{pmatrix}
  ||_2\simeq 2 \sqrt{\frac{2}{\pi}}\frac{1}{\sqrt{N}}\simeq \frac{1.596}{\sqrt{N}}\,,
\end{equation}
in reasonable agreement with the numerically estimated
behavior $1.4/\sqrt{N}$, with the error averaged over 10 realizations of the algorithm and $\Nl$ ranging from 50 to 5000, when considering the relatively large statistical uncertainty from the 10 runs. 
When iterating the sequence as $(H^2Rf_1)^n$, the random deviation of 
$\hat{\tilde{\bm p}}$ from the exact value (1,0) is expected to lead to a diffusive behavior of the 2-norm 
\eqref{eq:diffp} as function of $n$, i.e.~a slow increase
$\simeq\sqrt{1.596 n/N}$. Since in   
Fig.\ref{fig:expinc} $n_H$ includes  refreshment gates after each $H$,
which are, however, irrelevant after the 1st Hadamard and all
following ones with odd number, thus diluting the increase of the
2-norm by a
factor $\sqrt{3/4}$, one expects a behavior $\simeq\sqrt{1.596 \times
  (3/4) n_H/N}\simeq 0.01\sqrt{n_H}$.  Given the relatively large
statistical uncertainty, this is again in reasonable
agreement with the numerically found $\exp(-5.08413 + 0.532838 \ln
)\simeq 0.0064 n_H^{0.538}$.

\section{Appendix B: Measure and bounds of destructive interference}\label{app:Destr}
\subsection{Cram\'er-Rao bound for the estimation of the grabit state}
After concentration of signs, $\psi_{\bm i}$ and $\tilde{p}_{\bm i}$ can be written as
\begin{eqnarray}
  \label{eq:psipt}
  \psi_{\bm i}& =& P_{2\bm i}-P_{2\bm i+1}\\
  \tilde{p}_{\bm i}& =& P_{2\bm i}+P_{2\bm i+1}\,,
\end{eqnarray}
where $2\bm i+1$ means $2\bm i+(0,\ldots,0,1)$. Hence, the $\psi_{2\bm i+1}$ parametrize the b4v probabilities as
\begin{equation}
  \label{eq:pp}
  P_{2\bm i}=\frac{1}{2}(\tilde{p}_{2\bm i+1}+  \psi_{\bm i}),\,\,\, P_{2\bm i+1}=\frac{1}{2}(\tilde{p}_{2\bm i+1}-  \psi_{\bm i})\,.
\end{equation}
The Fisher information with respect to the $\psi_{\bm i}$ is given by
\begin{eqnarray}
  \label{eq:Ipsi}
  I_{\psi_{\bm i}}&=&\sum_{\bm j,\sigma}P_{2\bm j+\sigma}\left(\frac{\partial \ln P_{2\bm j+\sigma}}{\partial \psi_{\bm i}}\right)^2\\
&=&  \frac{1}{4}\sum_{\bm j,\sigma}'P_{2\bm j+\sigma}^{-1}\,,
 \end{eqnarray}
where the sum $\sum_{\bm j,\sigma}'$ is over all ${\bm j,\sigma}$ with $P_{2\bm j+\sigma}\ne 0$.  Hence the Cram\'er-Rao bound for the smallest possible standard deviation $\text{sdv}(\hat \psi_{\bm i})$ of an arbitrary unbiased estimator $\hat\psi$ of $\psi$ based on $\Nl$ samples reads
\begin{equation}
  \label{eq:sdv}
  \sum_{\bm i}''\text{sdv}(\hat \psi_{\bm i})\ge \sum_{\bm i}'' \frac{2}{\sqrt{\Nl(\frac{1}{P_{2\bm i}}+\frac{1}{P_{2\bm i+1}})}}\,,
\end{equation}
where the $\sum_{\bm j}''$ is over all ${\bm j}$ with $P_{2\bm j}P_{2\bm j+1}\ne 0$. This demonstrates the importance of large $\Nl$ for a good estimate of the grabit state.  Without any refresh, the effective $\Nl$ that contributes to an estimate of $\Nl$ decays due to destructive interference.  Indeed, that decay can be considered a {\em measure of destructive interference}. Let $\bm H$ be a histogram normalized to $\sum_{\bm I}H_{\bm I}=\Nl$. After the action of any gate or even an entire algorithm, we have a new histogram $\bm H'$. The estimate $\hat \psi'$ from that histogram is identical to the one obtained from $\bm H'$ by sign concentration and pair annihilation, called $\bm H''$, even if these two operations are not performed. It leads to a new effective $\Nl'=\sum_{\bm I}H_{\bm I}''$. Its mean value is given by
\begin{eqnarray}
  \label{eq:nl'}
  \langle \Nl'\rangle &=& \sum_{\bm H'} P_{\bm H'}\sum_{\bm i}|\sum_{\bm \sigma,\text{even}}H'_{2\bm i+\bm \sigma}-\sum_{\bm \sigma,\text{odd}}H'_{2\bm i+\bm \sigma}|\\
                      &=&\Nl \sum_{\bm H'} P_{\bm H'}\sum_{\bm i}|\sum_{\bm \sigma,\text{even}}R'_{2\bm i+\bm \sigma}-\sum_{\bm \sigma,\text{odd}}R'_{2\bm i+\bm \sigma}|\,,
\end{eqnarray}
where $P_{\bm H'}$ is the probability for a histogram ${\bm H'}$ given by a multinomial distribution of the b4v probability distribution $P_I'$ after the gate or algorithm.  In the limit of $\Nl\to\infty$, $R_I'\to P_I'$, and the distribution of histograms $R_I$ becomes arbitrarily sharp, concentrated on $P_I'$.  This leads to $\langle \Nl'\rangle \longrightarrow \Nl \sum_{\bm i}  |\psi_{\bm i}^{'(u)}|=\Nl ||\psi^{'(u)}||_1$, where the unnormalized grabit state $\psi_{\bm i}^{'(u)}$ is defined as 
\begin{equation}
  \label{eq:nl'2}
\psi_{\bm i}^{'(u)}\equiv \sum_{\bm \sigma,\text{even}}P'_{2\bm i+\bm \sigma}-\sum_{\bm \sigma,\text{odd}}P'_{2\bm i+\bm \sigma}\,.
\end{equation}
The reduction of $ \Nl\longrightarrow\langle \Nl'\rangle$ depends on the initial state.  E.g.~a first Hadamard on a state $\ket{0}$ does not introduce interference yet, and $\Nl' = \Nl$ in this case.  Only when the interferometer closes, i.e.~after $H^2$, probability amplitudes are brought to overlap and interfere away the amplitude of $\ket{1}$.  By considering the worst case over all input states $\psi$, the following definition of a measure of destructive interference is motivated:
\begin{equation}
  \label{eq:Dmeasure}
  {\cal D}\equiv \max_{\psi}(1-||\psi^{'(u)}||_1/||\psi||_1)\,,
\end{equation}
which  is only a function of the grabit gate (or algorithm) applied. We calculate it in the next subsubsection for the Hadamard gate and the phase gate. 

\subsection{Destructive interference and the decay of the effective $\Nl$}
%\subsubsection{Hadamard gate on a single qubit out of $\Nb$ many}
After applying a Hadamard gate on a single grabit (taken as the first one in the following, w.l.g.), the 1-norm $||\psi'||_1$ of the grabit state reads
\begin{eqnarray}
  \label{eq:n1psi'}
  ||\psi'||_1&=&||H\, \psi||_1=\sum_{i_1,\ldots,i_n}|\psi_{i_1,\ldots,i_n}'|\\
           &=&\frac{1}{2}\sum_{\bar{\bm i}}\left( |\psi_{0\bar{\bm i}}+\psi_{1\bar{\bm i}}|+|\psi_{0\bar{\bm i}}-\psi_{1\bar{\bm i}}| \right)\\
  &=&\sum_{\bar{\bm i}}\max\{|\psi_{0\bar{\bm i}}|,|\psi_{1\bar{\bm i}}|\}\,,
\end{eqnarray}
where $\bar{\bm i}\equiv(i_2,\ldots,i_n)$, $i_1\bar{\bm
  i}\equiv(i_1,i_2,\ldots,i_n)$, and we used $|a+b|+|a-b|=\max\{|a|,|b|\}$ for all $a,b\in \mathbb R$. Then
\begin{eqnarray}
  \label{eq:minPH}
  \min_{\psi}||\psi^{'(u)}||_1&=&\min_{\psi} \sum_{\bar{\bm i}}\max\{|\psi_{0\bar{\bm i}}|,|\psi_{1\bar{\bm i}}|\}\\
&\ge&\min_\psi \max\{ \sum_{\bar{\bm i}}|\psi_{0\bar{\bm i}}|,\sum_{\bar{\bm i}}|\psi_{1\bar{\bm i}}|\}\\
&\ge&\min_\psi \max\{ \sum_{\bar{\bm i}}|\psi_{0\bar{\bm i}}|,||\psi||_1-\sum_{\bar{\bm i}}|\psi_{0\bar{\bm i}}|\}\\
&\ge&\min_{0\le x\le ||\psi||_1}\max\{x,||\psi||_1-x\}\\
  &=&\frac{1}{2}||\psi||_1\,.
\end{eqnarray}
Hence, ${\cal D}_H=1/2$ for an arbitrary number of grabits (with $H$ applied to one of them). 
Similarly, one shows for the phase gate that ${\cal D}_{R_\varphi}=1-\frac{1+q^2}{(1+q)^2}$ with $q=|\cot \varphi|$. 
Note that above we assumed a fixed initial $\Nl$.  After the first gate that introduces destructive interference, the effective $\Nl'$ will be distributed.  However, since for $\Nl\to\infty$ the distribution of histograms becomes arbitrarily sharp, almost all histograms will have $\Nl'=  \langle \Nl'\rangle $. In that limit the effective $\Nl$ will hence decay over an entire quantum algorithm such that the final $\Nl'$ satisfies
\begin{equation}
  \label{eq:NLf}
  \langle\Nl'\rangle \gtrsim \Nl \prod_{g=1}^{N_g}{\cal D}_g\,, 
\end{equation}
where the product is over all gates of the algorithm. Hence, without any refresh the effective remaining $\Nl'$ decays exponentially with the numer of gates that lead to destructive inteference.  It should be kept in mind, however, that \eqref{eq:NLf} is the worst case scenario, pessimized over all intermediate states.  From the experience with $H^{2n}$ one expects that this exponential decay (and corresponding deterioriation of an estimate of the final $\psi$) can be avoided with refreshs after each interference-generating gate $g$, $S_g\in \cI$, but only for $\Nl\gg 2^{\Nb}$. 

\newpage

\section{Appendix C: Portfolio Optimization using QAOA} \label{app:B}

 Mean-variance portfolio analysis as introduced by Markowitz \cite{markowitz} can be used to choose assets out of a stock market in a way that minimizes risk. The following description is based on \cite{notebook_hellster,wellens}. Consider a portfolio consisting of $n$ assets. The return $\mu^i$ of each asset $i$ can be calculated via \begin{equation*}
    \mu^i = (\Pi_{k=1}^{M-1} (1+r_k^i))^{252/(M-1)},
\end{equation*} 

where $r_k^i = (p_{k}^i - p_{k+1}^i)/p_k^i$, $p_{k}^i$ is the historical daily price of asset $i$, $M$ is the number of available price data  points, and 252 is 
the number of trading days in a year. Correlations between different assets can be analysed using the annualized covariance matrix $\sigma$ defined as 
\begin{equation*}
    \sigma_{ij} = \frac{252}{M-1} \sum_{k=1}^{M-1} (r_k^i- \bar{r}^i)(r_k^j - \bar{r}^j),
\end{equation*}

where $\bar{r}^i = 1/(M-1) \sum_{k=1}^{M-1} r_k^i$ is the mean daily price change. The problem of finding a set of  $B$ assets out of a portfolio containing $n$ stocks that minimizes risk while maximizing return can be thought of as minimizing the following cost function

\begin{equation*}
    f(x_1,...,x_n) = q \sum_{i,j=1}^n x_j x_i \sigma_{ij} - (1-q) \sum_{i=1}^n x_i \mu^i 
\end{equation*}

under the constraint $\sum_i x_i = B$, where $x_i =1,0$, and $q$ is 
a factor to scale the weight of return and risk. The lower and upper index $i$ is used to label the assets in consistency with previous equations and is not representing any summation convention.  By replacing $x_i \rightarrow (1-Z_i)/2$ and absorbing constant terms into $\sigma'$ and $\mu'$, we can transform $f$ into the quantum operator 
\begin{equation*}
    H_c = \sum_{i>j} Z_iZ_j \sigma_{ij}' + \sum_i Z_i \mu_i'.
\end{equation*}

Choosing the mixer operator to be $H_B = -\sum_i X_i$  results in the QAOA Ansatz state
  \begin{equation*}
     \ket{\psi_p(\bm{\gamma},\bm{\beta})} = e^{-i\beta_pH_B} e^{-i\gamma_p H_C}... e^{-i\beta_1 H_B} e^{-i\gamma_1 H_C} \ket{+}^{\otimes n}. 
 \end{equation*}

The explicit data used in Fig.~\ref{fig:QAOA} are stock prices of 5 randomly selected stocks in the DAX-30 averaged over a period of 5 years.

\begin{figure}[h]
    \centering
    \includegraphics[width=\linewidth, height=5cm]{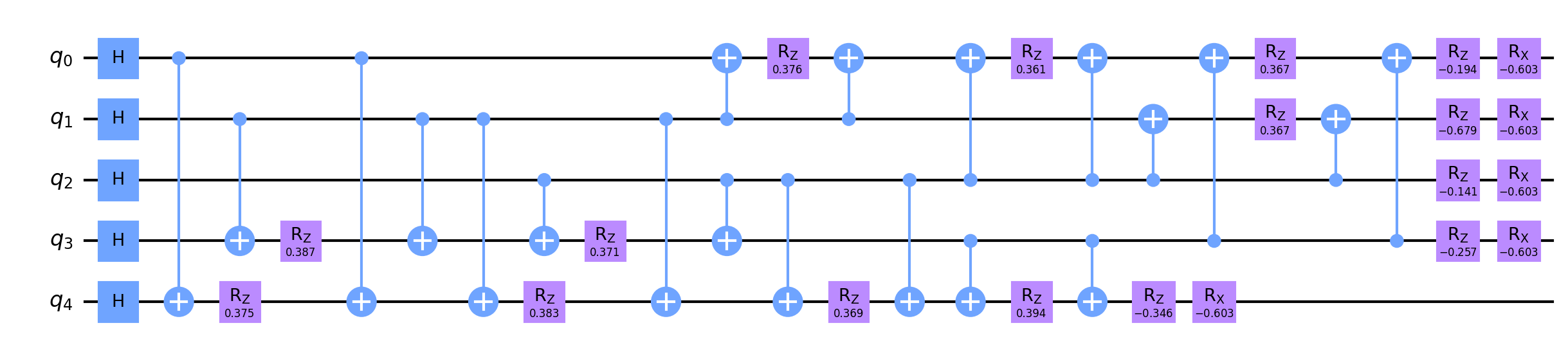}
    \caption{QAOA circuit for $p=1$ used to analyze the development of error $||\hat{\psi} - \Phi||_1$ as a function of the number of gates in Fig. \ref{fig:QAOA}. The $R_Z$ gates and CNOTs implement the cost Hamiltonian $H_c$ while the $R_x$ gates realize the mixer $H_B = -\sum_i X_i$. }
    \label{app:circ}
\end{figure}

%\subsection{Shor's factoring algorithm}

% \bibliography{../bibfile_master/mybibs_bt,grabit}
%\bibliography{C:/Users/dany_/Documents/mypapers/bibfile_master/mybibs_bt}
\bibliography{mybibs_bt}

\end{document}